\documentclass[11pt]{article}
\usepackage[margin=1in]{geometry}
\usepackage{amsmath,amssymb,amsthm,amsfonts,bm,mathrsfs}
\usepackage{booktabs}
\usepackage{multirow}
\usepackage{natbib}
\usepackage{enumitem}
\usepackage{hyperref}
\usepackage{mathtools}
\usepackage{graphicx}
\usepackage{microtype}
\usepackage{caption}
\hypersetup{colorlinks=true,citecolor=blue,linkcolor=blue,urlcolor=blue}

\newtheorem{theorem}{Theorem}[section]
\newtheorem{proposition}{Proposition}[section]
\newtheorem{lemma}{Lemma}[section]

\newtheorem{assumption}{Assumption}[section]

\newcommand{\R}{\mathbb{R}}
\newcommand{\E}{\mathbb{E}}
\newcommand{\Pbb}{\mathbb{P}}
\newcommand{\1}{\mathbf{1}}
\newcommand{\norm}[1]{\left\lVert #1 \right\rVert}
\newcommand{\abs}[1]{\left\lvert #1 \right\rvert}
\newcommand{\argmin}{\mathop{\rm argmin}\limits}

\newcommand{\sgn}{\mathrm{sgn}}
\newcommand{\tr}{\mathrm{tr}}
\newcommand{\hbic}{\mathrm{HBIC}}
\newcommand{\maybeincludegraphics}[2][]{%
  \IfFileExists{#2}{\includegraphics[#1]{#2}}{\fbox{\texttt{\detokenize{#2}}}}%
}

\title{Rank-Based Sparse Regression in Principal Components Space under Measurement Error}
\author{Long Feng \\
{Department of Statistics and Data Science, Nankai University}\\
Xiaoyi Wang\\
Center for Statistics and Data Science, Beijing Normal University\\ 
Le Zhou \\
{Department of Mathematics, Hong Kong Baptist University} }
\date{}

\begin{document}
\maketitle

\begin{abstract}
We study high-dimensional regression in principal components space when the predictors are observed with additive measurement error and the response errors may be heavy-tailed. The starting point is the $\ell_1$-penalized principal-components estimator of \cite{SongZou2026}, which enjoys a blessing-of-dimensionality phenomenon under predictor contamination but senstive for heavy-tailed data or outliers. We replace the squared loss by a Wilcoxon-type rank loss and then apply a one-step adaptive reweighting scheme to reduce the shrinkage bias of the initial $\ell_1$ fit. The resulting procedure combines robustness to heavy-tailed response errors with the contamination geometry induced by the empirical principal-components basis. Our main theorem gives a prediction bound for the fixed-$\lambda$ second-stage fitted mean.  Simulations show that the rank-based procedure is competitive under Gaussian noise and substantially more stable under heavy-tailed errors, especially when predictor contamination is present.
\end{abstract}

\noindent\textbf{Keywords:} principal components regression; measurement error; robust high-dimensional regression; Wilcoxon rank regression.

\section{Introduction}

Sparsity has been the organizing principle of high-dimensional regression for nearly three decades. The modern literature begins with the lasso of \citet{Tibshirani1996} and includes basis-pursuit and Dantzig-selector formulations, together with a large class of folded-concave refinements such as SCAD and MCP; see, among many others, \citet{ChenDonohoSaunders2001}, \citet{CandesTao2007}, \citet{FanLi2001}, \citet{Zhang2010}, and \citet{BickelRitovTsybakov2009}. In most of this literature, sparsity is imposed in the original coordinate system of the predictors. That viewpoint is natural and often effective, but it is not the only meaningful structural assumption. Whenever the regression function is expressed after a linear transformation of the covariates, sparsity can also be imposed in the transformed basis. The practical value of such transformed-space sparsity depends on the scientific problem and on how well the transformed coordinates align with the signal.

Principal components provide the most classical transformed basis for regression. Building on that idea, \citet{LangZou2020} studied $\ell_1$-penalized regression in the empirical principal-components space and showed that it can materially improve on classical principal component regression. Related empirical evidence appears in \citet{ChoiEtAl2020}, while \citet{SilinFan2022} provided a broader theoretical analysis of sparse regression in principal-components space under nonsparse signal configurations. These papers emphasize an important distinction: principal-components sparsity is not the same as classical principal component regression, which keeps only a small number of leading components and may miss predictive directions carried by lower-variance components.

A second line of work studies regression with error-contaminated covariates. Measurement error has long been a central topic in statistics; see the monograph of \citet{CarrollEtAl2006}. In high dimensions, most available procedures operate in the original predictor space and correct the estimating equations or loss function by explicitly using the covariance structure of the contamination. Representative examples include the nonconvex approach of \citet{LohWainwright2012}, the matrix-uncertainty selector of \citet{RosenbaumTsybakov2013}, and the convex CoCoLasso procedure of \citet{DattaZou2017}. These methods are designed for coefficient estimation in the presence of noisy predictors and are powerful when the contamination model is sufficiently well specified. They are less directly informative when the predictive model is built in a transformed space, or when one wishes to understand the effect of contamination without assuming access to a reliable estimate of the measurement-error covariance.

This issue was recently revisited by \citet{SongZou2026}, who analyzed $\ell_1$-penalized regression in the empirical principal-components space built from contaminated predictors. Under weak sparsity of the regression signal in the latent principal-components basis, they showed that the prediction error can attain the clean-data minimax rate up to an additional perturbation term depending on the eigengap and the contamination level. Their theory also reveals a blessing-of-dimensionality phenomenon: when the number of predictors is large, the price paid for measurement error can diminish rather than increase. That finding is conceptually different from the usual sparse measurement-error regression theory in the original coordinate system.

The estimator of \citet{SongZou2026}, however, is a classical $\ell_1$-penalized least-squares procedure. Squared loss is statistically efficient under light-tailed Gaussian errors, but it can be unstable under heavy-tailed response contamination. Robust high-dimensional variable-selection methods have therefore been developed around alternative loss functions, including quantile and least-absolute-deviation losses \citep{BelloniChernozhukovWang2011, BradicFanWang2011, Wang2013, FanFanBarut2014}, Huber-type losses \citep{FanLiWang2017, Loh2017, SunZhouFan2020}, minimum-distance and other robustified criteria \citep{LozanoMeinshausenYang2016, AvellaMedinaRonchetti2018, PrasadEtAl2020}, and rank-based procedures. Particularly relevant for the present paper is the Wilcoxon-type rank method of \citet{WangEtAl2020}, which combines a tuning-free first-stage rank lasso with a second-stage adaptive bias-reduction step and enjoys both robustness and high efficiency.

The purpose of this paper is to combine these two ideas. We study rank-based sparse regression in empirical principal-components space when the predictors are contaminated by additive measurement error. Our first-stage estimator replaces the squared loss of \citet{SongZou2026} by the Wilcoxon rank loss while retaining the same empirical principal-components geometry. A second, adaptively weighted stage then follows the local-linear approximation principle of \citet{ZouLi2008} and \citet{WangEtAl2020}. The resulting procedure is intended to be robust to heavy-tailed response errors, while preserving the predictor-contamination structure that drives the blessing-of-dimensionality phenomenon in the principal-components model.

In this paper, we provide a two-stage rank-based procedure for sparse regression in principal-components space under measurement error. The theory shows that the second-stage fitted mean inherits the same eigengap-dependent contamination term as in the least-squares analysis of \citet{SongZou2026}, but improves the estimation term to the oracle order associated with the active principal components. Numerical results demonstrate that the rank-based procedure is especially advantageous under heavy-tailed errors and contaminated predictors.

The rest of the paper is organized as follows. Section~\ref{sec:model} introduces the model, the rank-based pilot estimator, and the second-stage adaptive refinement. Section~\ref{sec:theory} presents the assumptions and the main prediction theorem. Section~\ref{sec:simulation} reports simulation results and Section \ref{sec:data} gives a real data application. Section~\ref{sec:discussion} concludes with a brief discussion. Technical proofs are collected in the appendix.

\section{Methodology}
\label{sec:model}

\subsection{Principal-components-space model under contaminated predictors}

Write the singular-value decomposition of the latent design matrix as
\[
\mathbf X = \mathbf U\mathbf D\mathbf V^\top,
\]
where $m=\min(n,p)$, $\mathbf U\in\R^{n\times m}$ and $\mathbf V\in\R^{p\times m}$ have orthonormal columns, and $\mathbf D$ is diagonal with nonincreasing nonnegative entries. Following \citet{SongZou2026}, we parameterize the latent mean vector in the principal-components basis of $\mathbf X$:
\[
\bm y = \mathbf U\bm\gamma^* + \bm\varepsilon,
\qquad
\bm\theta^* = n^{-1/2}\bm\gamma^*.
\]
The vector $\bm\theta^*$ is the weakly sparse parameter of interest. Working on the $\bm\theta$-scale is convenient because the natural penalty level is then of order $\sqrt{\log m/n}$.

The latent predictor matrix is unobservable. Instead, we observe
\[
\mathbf Z = \mathbf X + \mathbf W,
\]
where $\mathbf W\in\R^{n\times p}$ is a measurement-error matrix. Let
\[
\mathbf Z = \widehat{\mathbf U}\widehat{\mathbf D}\widehat{\mathbf V}^\top
\]
be the singular-value decomposition of $\mathbf Z$, with columns ordered by decreasing singular values. We denote the $i$th row of $\widehat{\mathbf U}$ by $\widehat{\bm u}_i^\top$ and write
\[
\widetilde{\mathbf U}=\sqrt n\,\widehat{\mathbf U},
\qquad
\widetilde{\bm u}_i^\top = \sqrt n\,\widehat{\bm u}_i^\top.
\]
All estimators below are computed in this empirical principal-components basis.

Following \citet{SongZou2026}, a natural benchmark estimator in the empirical principal-components basis is the $\ell_1$-penalized least-squares estimator
\[
\widehat{\bm\theta}^{\mathrm{LS}}(\lambda)
=
\argmin_{\bm\theta\in\R^m}
\left\{
\frac{1}{2n}\bigl\|\bm y-\widetilde{\mathbf U}\bm\theta\bigr\|_2^2
+
\lambda\|\bm\theta\|_1
\right\},
\qquad
\widehat{\bm y}^{\mathrm{LS}}(\lambda)
=
\widetilde{\mathbf U}\widehat{\bm\theta}^{\mathrm{LS}}(\lambda).
\]
This estimator exploits weak sparsity in the empirical principal-components space and, in the least-squares framework, leads to the blessing-of-dimensionality phenomenon emphasized by \citet{SongZou2026}. In particular, the effect of predictor contamination enters through the perturbation of the empirical principal-components basis, and the resulting prediction error is controlled by an eigengap-dependent term.

However, the criterion above is still based on squared loss, and therefore its performance may deteriorate when the model errors are heavy-tailed or when a small fraction of the responses exhibits unusually large fluctuations. This limitation is especially relevant in high-dimensional problems with contaminated predictors, where one would like to retain the principal-components representation of the signal while reducing the sensitivity of the estimation step to extreme observations. These considerations motivate us to replace the least-squares loss by a rank-based loss and to combine it with a bias-reduction step.

\subsection{Stage 1: the rank-lasso pilot}

Define the pairwise rank loss
\begin{equation}
\mathcal Q_n(\bm\theta;\widehat{\mathbf U})
=
\frac{1}{n(n-1)}\sum_{i\neq j}
\Big|
(y_i-y_j)- (\widetilde{\bm u}_i-\widetilde{\bm u}_j)^\top\bm\theta
\Big|.
\label{eq:Qn}
\end{equation}
The first-stage estimator is
\begin{equation}
\widehat{\bm\theta}^{(0)}(\lambda_0)
=
\argmin_{\bm\theta\in\R^m}
\left\{
\mathcal Q_n(\bm\theta;\widehat{\mathbf U}) + \lambda_0\norm{\bm\theta}_1
\right\}.
\label{eq:stage1}
\end{equation}
This is a convex optimization problem and can be written as a linear program by introducing the usual slack variables for each pairwise residual. The associated fitted mean is
\[
\widehat{\bm y}^{(0)} = \widetilde{\mathbf U}\widehat{\bm\theta}^{(0)}(\lambda_0).
\]

For the pilot penalty level we adopt the simulated calibration of \citet{WangEtAl2020}. Conditionally on $\widehat{\mathbf U}$, generate a random permutation $\bm r=(r_1,\ldots,r_n)^\top$ of $(1,\ldots,n)$ and define
\[
\bm\xi = 2\bm r-(n+1)\1_n,
\qquad
\widehat{\mathbf S}_n^{\circ}= -\frac{2}{n(n-1)}\widetilde{\mathbf U}^\top \bm\xi.
\]
Let $G_{\|\widehat{\mathbf S}_n^{\circ}\|_\infty}$ denote the conditional distribution function of $\|\widehat{\mathbf S}_n^{\circ}\|_\infty$ given $\widehat{\mathbf U}$. For fixed constants $c>1$ and $\alpha_0\in(0,1)$, set
\begin{equation}
\widehat\lambda_0
=
 c\,G^{-1}_{\|\widehat{\mathbf S}_n^{\circ}\|_\infty}(1-\alpha_0).
\label{eq:lambda0-sim}
\end{equation}
The theoretical development below only requires that $\lambda_0\asymp\sqrt{\log m/n}$, but \eqref{eq:lambda0-sim} provides a convenient data-driven calibration in practice.

\subsection{Stage 2: adaptive reweighting and HBIC selection}

As in \citet{WangEtAl2020}, the rank-lasso pilot can be bias-corrected through a local-linear approximation to a folded-concave penalty. Let $p_\lambda(t)$ be a folded-concave penalty such as SCAD \citep{FanLi2001} or MCP \citep{Zhang2010}, and let $p_\lambda'(t)$ denote its derivative on $(0,\infty)$. The only properties used in the theory are
\begin{equation}
\text{$p_\lambda'(t)$ is decreasing,}\qquad
p_\lambda'(t)\ge a_1\lambda \ \text{for }0<t<a_2\lambda,
\qquad
p_\lambda'(t)=0\ \text{for }t>a_2\lambda,
\label{eq:penalty-condition}
\end{equation}
for some constants $a_1>0$ and $a_2>1$. The SCAD penalty function is given by
\[
p_{\lambda}(|\beta|)
=
\lambda |\beta| \, I(0 \le |\beta| < \lambda)
+
\frac{a\lambda |\beta| - (\beta^{2}+\lambda^{2})/2}{a-1}
\, I(\lambda \le |\beta| \le a\lambda)
+
\frac{(a+1)\lambda^{2}}{2}
\, I(|\beta| > a\lambda),
\]
for some \(a>2\). The MCP function has the form
\[
p_{\lambda}(|\beta|)
=
\lambda
\left(
|\beta| - \frac{\beta^{2}}{2a\lambda}
\right)
I(0 \le |\beta| < a\lambda)
+
\frac{a}{2}\lambda^{2}
\, I(|\beta| \ge a\lambda),
\]
for some \(a>1\). In practice, these two popular choices lead to similar performance.

Given the pilot estimator, define adaptive weights
\[
w_j(\lambda)=p_\lambda'\big(\abs{\widehat\theta_j^{(0)}}\big),
\qquad j=1,\ldots,m.
\]
The second-stage estimator is
\begin{equation}
\widehat{\bm\theta}^{(1)}(\lambda)
=
\argmin_{\bm\theta\in\R^m}
\left\{
\mathcal Q_n(\bm\theta;\widehat{\mathbf U}) + \sum_{j=1}^m w_j(\lambda)\abs{\theta_j}
\right\}.
\label{eq:stage2}
\end{equation}
For fixed weights this is again a convex weighted $\ell_1$ rank-regression problem.

To choose $\lambda$ in practice, let $\Lambda_1$ be a finite grid. For each $\lambda\in\Lambda_1$, compute $\widehat{\bm\theta}^{(1)}(\lambda)$ and its support
\[
A^{(1)}_{\lambda}=\{j:\widehat\theta_j^{(1)}(\lambda)\neq 0\}.
\]
Then refit the unpenalized rank loss on that support,
\[
\widetilde{\bm\theta}^{(1)}_{\lambda}
=
\argmin_{\bm\theta\in\R^m,\,\theta_{(A^{(1)}_{\lambda})^c}=0}
\mathcal Q_n(\bm\theta;\widehat{\mathbf U}),
\]
and define the high-dimensional BIC criterion
\begin{equation}
\hbic_1(\lambda)
=
\log\!\Big\{\mathcal Q_n\big(\widetilde{\bm\theta}^{(1)}_{\lambda};\widehat{\mathbf U}\big)\Big\}
+
|A^{(1)}_{\lambda}|\,\frac{\log\log n}{n}\log m.
\label{eq:hbic1}
\end{equation}
The practical second-stage penalty level is $\widehat\lambda = \argmin_{\lambda\in\Lambda_1}\hbic_1(\lambda)$, and the corresponding fitted mean is
\[
\widehat{\bm y}^{(1)}_{\mathrm{HBIC}}=\widetilde{\mathbf U}\widehat{\bm\theta}^{(1)}(\widehat\lambda).
\]

\section{Theory}
\label{sec:theory}

\subsection{Assumptions}

\begin{assumption}[Centering and coherence]
\label{ass:center}
The response and the principal-components bases are centered in the sense that $\1_n^\top\bm y=0$, $\1_n^\top\mathbf U=\bm 0$, and $\1_n^\top\widehat{\mathbf U}=\bm 0$. Moreover, there exists a constant $C_U>0$ such that
\[
\max_{1\le i\le n,\,1\le k\le m} \sqrt n |U_{ik}| \le C_U,
\qquad
\max_{1\le i\le n,\,1\le k\le m} \sqrt n |\widehat U_{ik}| \le C_U
\]
with probability tending to one.
\end{assumption}

\begin{assumption}[Heavy-tailed response errors]
\label{ass:error}
The variables $\varepsilon_1,\ldots,\varepsilon_n$ are independent and identically distributed, independent of $\mathbf W$, and have median zero. Their density $f_\varepsilon$ is continuous. Let
\[
g(t)=\int_{\R} f_\varepsilon(u)f_\varepsilon(u+t)\,du,
\]
which is the density of $\varepsilon_i-\varepsilon_j$ at $t$. There exist constants $g_0>0$, $L_g>0$, and $r_0>0$ such that
\[
g(0)\ge g_0,
\qquad
|g(t)-g(0)|\le L_g|t|,
\qquad |t|\le r_0.
\]
No finite-variance assumption is imposed on $\varepsilon_i$.
\end{assumption}

\begin{assumption}[Sub-Gaussian measurement errors]
\label{ass:W}
The latent design matrix $\mathbf X$ is fixed throughout. The rows of $\mathbf W$ are independent and identically distributed mean-zero $\tau_W^2$-sub-Gaussian random vectors. In addition, the squared row norms $\|\mathbf W_{i\cdot}\|_2^2$, $i=1,\ldots,n$, are sub-exponential with parameters of the same order as in \citet{SongZou2026}. The matrix $\mathbf W$ is independent of $\bm\varepsilon$. Let
\[
c_W = p^{-1}\E\|\mathbf W_{1\cdot}\|_2^2.
\]
\end{assumption}

Let
\[
\tau_\star=\sqrt{\frac{\log n}{n}},
\qquad
A=\left\{k\in[m]:|\theta_k^*|\ge \tau_\star\right\}.
\]
We focus on the nontrivial case $A\neq\varnothing$.
Let $\lambda_0\asymp\sqrt{\log m/n}$ and define
\begin{equation}
\Delta_{n,\kappa}
=
\frac{C_\Delta}{\kappa}
\left(
\frac{n}{p}+\frac{n^2}{p^2}+\frac{n\norm{\mathbf X}_2^2}{p^2}
\right)^{1/2},
\label{eq:Delta}
\end{equation}
where $C_\Delta$ depends only on the sub-Gaussian and sub-exponential constants in Assumption~\ref{ass:W}. Further let
\begin{equation}
\eta_n
=
\sqrt{|A|}\,\lambda_0
+
\sqrt{\frac{\log m}{n}}
+
R_q^{1/2}\tau_\star^{1-q/2}
+
\Delta_{n,\kappa}\frac{\norm{\bm y^*}_2}{\sqrt n}.
\label{eq:eta-n}
\end{equation}
 Every active eigenvalue of $p^{-1}\mathbf X\mathbf X^\top$ is assumed to be simple and separated from the rest:
\begin{equation}
\kappa
=
\min_{j\in A}\min_{\ell\neq j}
\left|
\zeta_j\left(\frac{1}{p}\mathbf X\mathbf X^\top\right)-\zeta_\ell\left(\frac{1}{p}\mathbf X\mathbf X^\top\right)
\right|
>0,
\label{eq:kappa-def}
\end{equation}
where $\zeta_i(\cdot)$ denotes the $i$th largest eigenvalue. When defining the columns of $\widehat{\mathbf U}$, choose the sign of $\widehat{\bm u}_j$ so that $\widehat{\bm u}_j^\top\bm u_j\ge 0$ for every $j\in A$.

\begin{assumption}[Signal structure, active eigengap, and tuning]
\label{ass:sparse}
For some fixed $q\in[0,1]$ and radius $R_q>0$,
\[
\sum_{k=1}^m |\theta_k^*|^q \le R_q.
\]
In addition, $\norm{\bm y^*}_2^2/n$ is bounded away from zero and infinity, where $\bm y^*=\mathbf U\bm\gamma^*=\sqrt n\,\mathbf U\bm\theta^*$.
We also assume
\label{ass:perturb}
\begin{equation}
\frac{1}{\kappa^2}
\left(
\frac{n}{p}+\frac{n^2}{p^2}+\frac{n\norm{\mathbf X}_2^2}{p^2}
\right)=o(1)
\qquad\text{and}\qquad
\frac{\log n}{\kappa^2 p}=o(1).
\label{eq:gap-rate}
\end{equation}
For sufficiently large constants $C_\beta>0$ and $C_\lambda>0$, the stage-2 penalty level $\lambda$ satisfies
\begin{equation}
\min_{j\in A}|\theta_j^*|
\ge
 a_2\lambda + C_\beta\eta_n,
\qquad
\lambda \ge C_\lambda\eta_n.
\label{eq:oracle-event-cond}
\end{equation}
\end{assumption}

\subsection{Main result}

The first result controls the pilot estimator needed to construct the second-stage adaptive weights.

\begin{theorem}[First-stage prediction bound]
\label{thm:stage1-pred}
Suppose Assumptions~\ref{ass:center}--\ref{ass:sparse} hold. Let $\widehat{\bm\theta}^{(0)}$ be any solution of \eqref{eq:stage1} with tuning parameter $\lambda_0\asymp\sqrt{\log m/n}$. Then, with probability at least $1-C_1 m^{-C_2}$,
\begin{equation}
\frac{1}{n}\norm{\widetilde{\mathbf U}\widehat{\bm\theta}^{(0)}-\bm y^*}_2^2
\le
C|A|\lambda_0^2
+
CR_q\tau_\star^{2-q}
+
\frac{C\,\norm{\bm y^*}_2^2}{n\kappa^2}
\left(
\frac{n}{p}+\frac{n^2}{p^2}+\frac{n\norm{\mathbf X}_2^2}{p^2}
\right).
\label{eq:stage1-bound}
\end{equation}
Moreover,
\begin{equation}
\norm{\widehat{\bm\theta}^{(0)}-\bar{\bm\theta}}_2
\le
C\Big(\sqrt{|A|}\,\lambda_0 + n^{-1/2}\norm{\bm v_A}_2\Big),
\qquad
\bm v_A=(\mathbf I_n-\widehat{\mathbf P}_A)\bm y^*,
\label{eq:l2-stage1}
\end{equation}
where $\widehat{\mathbf P}_A=\widehat{\mathbf U}_A\widehat{\mathbf U}_A^\top$.
\end{theorem}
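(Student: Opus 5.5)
We follow the standard rank-lasso strategy of \citet{WangEtAl2020}. The twist is that the target is defined through the empirical basis, so the first task is to choose a good comparison vector. Since $\widehat{\mathbf U}$ has orthonormal columns, we define $\bar{\bm\theta}$ on the active set as the projection coefficients: $\bar{\bm\theta}_A = n^{-1/2}\widehat{\mathbf U}_A^\top\bm y^*$ and $\bar{\bm\theta}_{A^c}=0$. Then
\[
\widetilde{\mathbf U}\bar{\bm\theta}=\widehat{\mathbf P}_A\bm y^*,
\qquad
\bm y^*-\widetilde{\mathbf U}\bar{\bm\theta}=\bm v_A .
\]
The model becomes $y_i=\widetilde{\bm u}_i^\top\bar{\bm\theta}+v_{A,i}+\varepsilon_i$, so $\bm v_A$ acts as an approximation error. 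We bound $n^{-1}\|\bm v_A\|_2^2$ by two pieces. The first is the weak-sparsity tail $\sum_{k\notin A}|\theta_k^*|^2\le R_q\tau_\star^{2-q}$. The second is the basis-perturbation term $\|(\mathbf I-\widehat{\mathbf P}_A)\mathbf U_A\bm\theta^*_A\|^2$, which we control by a Davis--Kahan argument for each simple active eigenvector of $p^{-1}\mathbf Z\mathbf Z^\top$ versus $p^{-1}\mathbf X\mathbf X^\top$. For this we use the sign convention and the concentration of $p^{-1}(\mathbf X\mathbf W^\top+\mathbf W\mathbf X^\top+\mathbf W\mathbf W^\top-c_W p\mathbf I)$ supplied by the assumptions on $\mathbf W$, exactly as in \citet{SongZou2026}. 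The identity shift $c_W\mathbf I$ does not change eigenvectors. This gives $\|\widehat{\bm u}_j-\bm u_j\|\lesssim\Delta_{n,\kappa}$ and hence the third term of \eqref{eq:stage1-bound}. We condition on $\mathbf W$ from here on, which is legitimate because $\bm\varepsilon$ is independent of $\mathbf W$.

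Next comes the basic inequality. Let $\bm h=\widehat{\bm\theta}^{(0)}-\bar{\bm\theta}$. Optimality gives
\[
\mathcal Q_n(\bar{\bm\theta}+\bm h)-\mathcal Q_n(\bar{\bm\theta})\le\lambda_0(\|\bar{\bm\theta}\|_1-\|\bar{\bm\theta}+\bm h\|_1).
\]
We split the left side into a linear term $\bm S_n^\top\bm h$ and a remainder. Here $\bm S_n$ is the subgradient of $\mathcal Q_n$ at $\bar{\bm\theta}$, which equals $-\frac{2}{n(n-1)}\widetilde{\mathbf U}^\top\bm\xi$ evaluated at the ranks of $\varepsilon_i+v_{A,i}$.
\begin{itemize}
\item \emph{Score term.} Ignoring $\bm v_A$, the ranks are a uniform permutation by exchangeability. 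With $|\widetilde U_{ik}|\le C_U$ and the centering $\1^\top\widehat{\mathbf U}=0$, Hoeffding's inequality for sampling without replacement gives $\|\bm S_n\|_\infty\le C\sqrt{\log m/n}\le\lambda_0/2$ with probability $1-Cm^{-C_2}$. The effect of $\bm v_A$ on the score is not handled here: we put it into the remainder.
\item \emph{Cone condition.} The score bound yields $\|\bm h_{A^c}\|_1\le3\|\bm h_A\|_1+(\text{slack from }\bm v_A)$.
\end{itemize}

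The main step, and the obstacle, is the restricted strong convexity of the rank loss:
\[
\E_{\varepsilon}\big[\mathcal Q_n(\bar{\bm\theta}+\bm h)-\mathcal Q_n(\bar{\bm\theta})\big]-\bm S_n^\top\bm h\;\ge\; c\,g_0\,n^{-1}\|\widetilde{\mathbf U}\bm h\|_2^2-C n^{-1/2}\|\bm v_A\|_2\,\|\bm h\|_2 ,
\]
required for $\|\widetilde{\mathbf U}\bm h\|/\sqrt n$ up to a radius $r_0$. We derive it from Knight's identity and the density $g$ near zero, using the Lipschitz bound on $g$ to absorb the shift $v_{A,i}-v_{A,j}$. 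Note that $n^{-2}\sum_{i,j}((\widetilde{\bm u}_i-\widetilde{\bm u}_j)^\top\bm h)^2=2n^{-1}\|\widetilde{\mathbf U}\bm h\|^2=2\|\bm h\|_2^2$, by centering and orthonormality; this is where the principal-components basis makes restricted eigenvalues trivial. The deviation of the empirical process from its mean we control uniformly over the cone by symmetrization and a contraction argument for the Lipschitz absolute loss on U-statistics, as in \citet{WangEtAl2020}. This gives a uniform bound $C\sqrt{\log m/n}\,\|\bm h\|_1$, which the cone converts to $C\sqrt{|A|\log m/n}\,\|\bm h\|_2$. We must also run a convexity localization argument to handle $\bm h$ outside the local radius; this is the delicate point.

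Combining these pieces gives
\[
c\|\bm h\|_2^2\le C\sqrt{|A|}\lambda_0\|\bm h\|_2+Cn^{-1/2}\|\bm v_A\|_2\|\bm h\|_2 ,
\]
which is \eqref{eq:l2-stage1}. The prediction bound then follows from
\[
n^{-1}\|\widetilde{\mathbf U}\widehat{\bm\theta}^{(0)}-\bm y^*\|^2\le2\|\bm h\|_2^2+2n^{-1}\|\bm v_A\|^2 .
\]
We finish by inserting the bound on $\|\bm v_A\|$ from the first paragraph.
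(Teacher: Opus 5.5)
Your proposal follows essentially the same route as the paper's proof. The shared steps are the oracle target $\bar{\bm\theta}$ with residual $\bm\varepsilon+\bm v_A$, a clean rank-score bound, and Knight-identity curvature in which the restricted-eigenvalue step is trivial because of centering and $\widetilde{\mathbf U}^\top\widetilde{\mathbf U}=n\mathbf I_m$. You keep the shift by $\bm v_A$ as an $n^{-1/2}\norm{\bm v_A}_2\norm{\bm h}_2$ cross term, which after Young's inequality is the paper's $-Cn^{-1}\norm{\bm v_A}_2^2$, and you end with the same final decomposition; your explicit symmetrization/contraction step and the localization issue you flag spell out what the paper compresses into the single display \eqref{eq:loss-lower-stage1}.

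One small fix concerns the perturbation term. The per-eigenvector bounds $\norm{\widehat{\bm u}_j-\bm u_j}_2\lesssim\Delta_{n,\kappa}$ only give $\norm{(\mathbf I_n-\widehat{\mathbf P}_A)\mathbf U_A\bm\theta_A^*}_2\le\sqrt{|A|}\,\Delta_{n,\kappa}\norm{\bm\theta_A^*}_2$. To obtain the third term of \eqref{eq:stage1-bound} without an extra $|A|$ factor, use the subspace form $\norm{\widehat{\mathbf P}_A-\mathbf P_A}_2\le\Delta_{n,\kappa}$, as in Lemma~\ref{lem:projector}.
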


Let
\[
\bar{\bm\theta}
=
\begin{pmatrix}
 n^{-1/2}\widehat{\mathbf U}_A^\top\bm y^*\\[0.3em]
 \bm 0
\end{pmatrix}
\]
be the oracle target obtained by truncating to the active empirical principal-components subspace. The corresponding oracle estimator is
\begin{equation}
\widehat{\bm\theta}^{(o)}
=
\argmin_{\bm\theta\in\R^m,\,\theta_{A^c}=0}
\mathcal Q_n(\bm\theta;\widehat{\mathbf U}).
\label{eq:oracle-def}
\end{equation}
To state the oracle reduction precisely, define the pilot separation event
\begin{equation}
\mathcal E_{\mathrm{sep}}(\lambda)=
\left\{
\min_{j\in A}\abs{\widehat\theta_j^{(0)}}>a_2\lambda,
\quad
\max_{k\in A^c}\abs{\widehat\theta_k^{(0)}}<a_2\lambda
\right\},
\label{eq:Esep}
\end{equation}
and the oracle off-support score event
\begin{equation}
\mathcal E_{\mathrm{score}}(\lambda)
=
\left\{
\begin{array}{l}
\text{there exists }\bm\Delta^{(o)}\in\partial\mathcal Q_n(\widehat{\bm\theta}^{(o)};\widehat{\mathbf U})\text{ with }\bm\Delta_A^{(o)}=\bm 0,
\norm{\bm\Delta_{A^c}^{(o)}}_\infty<a_1\lambda
\end{array}
\right\}.
\label{eq:Escore}
\end{equation}

The next theorem is the deterministic oracle-reduction step. It shows that once the pilot separation event and the oracle score event occur at the same fixed penalty level $\lambda$, the weighted second-stage objective has exactly the same minimizers as the truncated oracle objective.

\begin{theorem}[Second-stage oracle equivalence]
\label{thm:oracle-reduction}
Assume Assumptions~\ref{ass:center}--\ref{ass:sparse} hold, and let the penalty derivative satisfy \eqref{eq:penalty-condition}. On the event
$$
\mathcal E_{\mathrm{sep}}(\lambda)\cap \mathcal E_{\mathrm{score}}(\lambda),
$$
the minimizer set of the weighted second-stage problem \eqref{eq:stage2} coincides with the minimizer set of the oracle problem \eqref{eq:oracle-def}. In particular, every second-stage minimizer is an oracle minimizer, and every oracle minimizer is also a second-stage minimizer.
\end{theorem}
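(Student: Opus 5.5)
The argument is purely deterministic and convex-analytic. We work on the event $\mathcal E_{\mathrm{sep}}(\lambda)\cap\mathcal E_{\mathrm{score}}(\lambda)$ and read off the weights first. By \eqref{eq:penalty-condition}, $p_\lambda'(t)=0$ for $t>a_2\lambda$. Since $\min_{j\in A}|\widehat\theta_j^{(0)}|>a_2\lambda$, this gives $w_j(\lambda)=0$ for every $j\in A$. For $k\in A^c$ we have $|\widehat\theta_k^{(0)}|<a_2\lambda$; by monotonicity of $p_\lambda'$ and the lower bound $p_\lambda'(t)\ge a_1\lambda$ on $(0,a_2\lambda)$, this gives $w_k(\lambda)\ge a_1\lambda$. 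If $\widehat\theta_k^{(0)}=0$ we interpret $w_k=p_\lambda'(0+)\ge a_1\lambda$. The stage-2 objective therefore reduces to
\[
F(\bm\theta)=\mathcal Q_n(\bm\theta;\widehat{\mathbf U})+\sum_{k\in A^c}w_k|\theta_k|,\qquad w_k\ge a_1\lambda,
\]
and the oracle problem \eqref{eq:oracle-def} is to minimize $F$ over the subspace $\{\theta_{A^c}=0\}$, on which $F=\mathcal Q_n$.

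\emph{Step 1: every oracle minimizer minimizes $F$.} Take the oracle minimizer $\widehat{\bm\theta}^{(o)}$ and the subgradient $\bm\Delta^{(o)}$ supplied by $\mathcal E_{\mathrm{score}}$. Since $\widehat\theta^{(o)}_{A^c}=0$, the vector with components $s_k=-\Delta^{(o)}_k/w_k\in(-1,1)$ for $k\in A^c$ is a valid subgradient of $|\theta_k|$ at $0$; we may scale because $|\Delta_k^{(o)}|<a_1\lambda\le w_k$. Hence $\bm 0\in\partial F(\widehat{\bm\theta}^{(o)})$, using the sum rule for finite convex functions. So $\widehat{\bm\theta}^{(o)}$ minimizes $F$, and
\[
\min F=\min_{\theta_{A^c}=0}\mathcal Q_n .
\]
The event only provides the certificate at one oracle minimizer $\widehat{\bm\theta}^{(o)}$. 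But every other oracle minimizer has the same $F$-value, namely the common minimum value, so it also minimizes $F$. This settles the inclusion oracle $\subseteq$ stage 2.

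\emph{Step 2: every minimizer of $F$ is an oracle minimizer.} Let $\bm\theta$ minimize $F$. It suffices to show $\theta_{A^c}=0$, since then $\mathcal Q_n(\bm\theta)=F(\bm\theta)=\min F=\min_{\theta_{A^c}=0}\mathcal Q_n$. We use the subgradient inequality for the convex $\mathcal Q_n$ at $\widehat{\bm\theta}^{(o)}$:
\[
\mathcal Q_n(\bm\theta)\ge \mathcal Q_n(\widehat{\bm\theta}^{(o)})+\langle\bm\Delta^{(o)},\bm\theta-\widehat{\bm\theta}^{(o)}\rangle
=\mathcal Q_n(\widehat{\bm\theta}^{(o)})+\sum_{k\in A^c}\Delta^{(o)}_k\theta_k,
\]
because $\bm\Delta^{(o)}_A=0$ and $\widehat\theta^{(o)}_{A^c}=0$. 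Therefore
\[
F(\bm\theta)\ge F(\widehat{\bm\theta}^{(o)})+\sum_{k\in A^c}\bigl(w_k-|\Delta_k^{(o)}|\bigr)|\theta_k|.
\]
Each coefficient $w_k-|\Delta^{(o)}_k|$ is strictly positive. Since $F(\bm\theta)=\min F=F(\widehat{\bm\theta}^{(o)})$, every $\theta_k$ with $k\in A^c$ must vanish, which proves the reverse inclusion.

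The main point needing care is the strictness in Step 2: the inequalities $|\Delta^{(o)}_k|<a_1\lambda\le w_k$ must combine into a strict gap. This is exactly why $\mathcal E_{\mathrm{score}}$ uses a strict inequality. A secondary point is the convention for $w_k$ when $\widehat\theta^{(0)}_k=0$, which I handle through $p_\lambda'(0+)$. No probabilistic input or other part of the assumptions is needed beyond the penalty conditions and the two events.
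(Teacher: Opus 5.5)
Your proof is correct and takes essentially the same route as the paper. You read off $w_A=0$ and $w_{A^c}\ge a_1\lambda$ from $\mathcal E_{\mathrm{sep}}(\lambda)$, then use the subgradient inequality for $\mathcal Q_n$ at $\widehat{\bm\theta}^{(o)}$ with the certificate from $\mathcal E_{\mathrm{score}}(\lambda)$ to get $F(\bm\theta)-F(\widehat{\bm\theta}^{(o)})\ge\sum_{k\in A^c}(w_k-|\Delta_k^{(o)}|)|\theta_k|$, whose strictly positive coefficients force $\bm\theta_{A^c}=\bm 0$. Your Step~1 argument via $\bm 0\in\partial F(\widehat{\bm\theta}^{(o)})$ is only a minor variant of the paper's direct use of this inequality, and your remarks on other oracle minimizers and on the convention $w_k=p_\lambda'(0+)$ make explicit two points the paper leaves implicit.
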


Lemma~\ref{lem:oracle-events} in the appendix shows that the signal-strength and tuning clause \eqref{eq:oracle-event-cond} in Assumption~\ref{ass:sparse} implies
\[
\Pbb\big(\mathcal E_{\mathrm{sep}}(\lambda)\cap\mathcal E_{\mathrm{score}}(\lambda)\big)\to 1.
\]

\begin{theorem}[Prediction error of the fixed-$\lambda$ second-stage fitted mean]
\label{thm:main-stage2}
Suppose Assumptions~\ref{ass:center}--\ref{ass:sparse} hold and let the penalty derivative satisfy \eqref{eq:penalty-condition}. Then the fixed-$\lambda$ second-stage fitted mean
\[
\widehat{\bm y}^{(1)}(\lambda)=\widetilde{\mathbf U}\widehat{\bm\theta}^{(1)}(\lambda)
\]
satisfies
\begin{equation}
\frac{1}{n}\norm{\widehat{\bm y}^{(1)}(\lambda)-\bm y^*}_2^2
=
O_P\left\{
\frac{|A|}{n}
+
R_q\tau_\star^{2-q}
+
\frac{\norm{\bm y^*}_2^2}{n\kappa^2}
\left(
\frac{n}{p}+\frac{n^2}{p^2}+\frac{n\norm{\mathbf X}_2^2}{p^2}
\right)
\right\}.
\label{eq:main-stage2-bound}
\end{equation}
In particular, relative to the first-stage rank-lasso rate, the estimation term improves to the oracle order $|A|/n$, while the predictor-contamination contribution remains unchanged.
\end{theorem}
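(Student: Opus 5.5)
The argument has three steps: reduce the second-stage fit to the oracle, split the oracle prediction error into an estimation part and an approximation part, and bound each part separately.

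\textbf{Step 1: reduction to the oracle.} By Lemma~\ref{lem:oracle-events}, the event $\mathcal E_{\mathrm{sep}}(\lambda)\cap\mathcal E_{\mathrm{score}}(\lambda)$ has probability tending to one under \eqref{eq:oracle-event-cond}. On this event, Theorem~\ref{thm:oracle-reduction} shows that every minimizer $\widehat{\bm\theta}^{(1)}(\lambda)$ of \eqref{eq:stage2} is an oracle minimizer $\widehat{\bm\theta}^{(o)}$ from \eqref{eq:oracle-def}. Since the statement is an $O_P$ bound, it therefore suffices to bound $n^{-1}\|\widetilde{\mathbf U}\widehat{\bm\theta}^{(o)}-\bm y^*\|_2^2$.

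\textbf{Step 2: orthogonal decomposition.} The columns of $\widehat{\mathbf U}$ are orthonormal, and the oracle estimate $\widehat{\bm\theta}^{(o)}$ and the target $\bar{\bm\theta}$ are both supported on $A$. Moreover $\widetilde{\mathbf U}\bar{\bm\theta}=\widehat{\mathbf P}_A\bm y^*$. This gives the exact decomposition
\[
\frac1n\|\widetilde{\mathbf U}\widehat{\bm\theta}^{(o)}-\bm y^*\|_2^2=\|\widehat{\bm\theta}^{(o)}_A-\bar{\bm\theta}_A\|_2^2+\frac1n\|\bm v_A\|_2^2 .
\]

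\textbf{Step 3: the approximation term $\frac1n\|\bm v_A\|_2^2$.} This term is the same one that appears in the least-squares analysis and in the proof of Theorem~\ref{thm:stage1-pred}. I would write
\[
\bm v_A=(\mathbf I-\mathbf P_A)\bm y^*+(\mathbf P_A-\widehat{\mathbf P}_A)\bm y^*,\qquad \mathbf P_A=\mathbf U_A\mathbf U_A^\top .
\]
\begin{itemize}
\item For the first piece, $(\mathbf I-\mathbf P_A)\bm y^*=\sqrt n\,\mathbf U_{A^c}\bm\theta^*_{A^c}$, so $\frac1n\|(\mathbf I-\mathbf P_A)\bm y^*\|_2^2=\sum_{k\notin A}\theta_k^{*2}$. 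Since $|\theta_k^*|<\tau_\star$ off $A$, this is at most $\tau_\star^{2-q}\sum_k|\theta_k^*|^q\le R_q\tau_\star^{2-q}$.
\item For the second piece, the Davis--Kahan $\sin\Theta$ bound applied to $p^{-1}\mathbf Z\mathbf Z^\top$ versus $p^{-1}\mathbf X\mathbf X^\top$ controls each active eigenvector. Together with the perturbation bound $\|p^{-1}(\mathbf Z\mathbf Z^\top-\mathbf X\mathbf X^\top)-c_W\mathbf I\|_2\lesssim(n/p+n^2/p^2+n\|\mathbf X\|_2^2/p^2)^{1/2}$ and the gap condition \eqref{eq:gap-rate}, this gives $\|(\mathbf P_A-\widehat{\mathbf P}_A)\bm y^*\|_2\le\Delta_{n,\kappa}\|\bm y^*\|_2$ with high probability. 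I take this from the proof of Theorem~\ref{thm:stage1-pred} and \citet{SongZou2026}; the isotropic shift $c_W\mathbf I$ does not move eigenvectors, which is why only the centred perturbation enters.
\end{itemize}
Squaring both pieces reproduces the second and third terms of \eqref{eq:main-stage2-bound}.

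\textbf{Step 4: the estimation term $\|\widehat{\bm\theta}^{(o)}_A-\bar{\bm\theta}_A\|_2^2$.} I expect this step to be the main obstacle: showing it is $O_P(|A|/n+\frac1n\|\bm v_A\|_2^2)$ using only a low-dimensional rank-regression argument. Write $\bm y=\widetilde{\mathbf U}_A\bar{\bm\theta}_A+\bm v_A+\bm\varepsilon$ and $\bm\delta=\bm\theta_A-\bar{\bm\theta}_A$. The oracle minimizes
\[
Q(\bm\delta)=\frac{1}{n(n-1)}\sum_{i\ne j}\bigl|\varepsilon_i-\varepsilon_j+(v_i-v_j)-(\widetilde{\bm u}_{iA}-\widetilde{\bm u}_{jA})^\top\bm\delta\bigr| .
\]
The standard Wilcoxon/convexity argument of \citet{WangEtAl2020} then runs as follows.

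\emph{(a) Score.} Conditionally on $\mathbf W$ (so $\widehat{\mathbf U}$ and $\bm v_A$ are fixed), the gradient at $\bm\delta=0$ has two parts. The first is a U-statistic score $\frac{-2}{n(n-1)}\widetilde{\mathbf U}_A^\top\bm\xi$ built from sign ranks; by the coherence bound $C_U$, its $\ell_2$ norm is $O_P(\sqrt{|A|/n})$. The second is a bias part from $\bm v_A$. Since $\widetilde{\mathbf U}_A^\top\bm v_A=\bm 0$ and $\bm 1^\top\widetilde{\mathbf U}_A=\bm 0$, its linearization $2g(0)\,n^{-1}\widetilde{\mathbf U}_A^\top\bm v_A$ vanishes; the remainder is controlled by $L_g$ and is of order $n^{-1}\|\bm v_A\|_2^2$ times bounded factors.

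\emph{(b) Curvature.} Using $g(0)\ge g_0$, the expected loss has curvature $g(0)\|\bm\delta\|_2^2$ in a neighbourhood of radius $r_0/C_U\sqrt{|A|}$. This follows from a Knight-identity expansion, with a uniform deviation bound for the centred empirical process over $\|\bm\delta\|_2\le t$ of order $t\sqrt{|A|/n}$.

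\emph{(c) Localization.} Convexity of $Q$ localizes the minimizer, giving $\|\bm\delta\|_2=O_P(\sqrt{|A|/n}+n^{-1/2}\|\bm v_A\|_2)$.

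The delicate point is that $\bm v_A$ is not small pointwise, so the quadratic expansion must be carried out around the heavy-tailed pair differences with a shifted argument. The Lipschitz condition on $g$ handles the smooth part. A cruder fallback is to bound the bias contribution simply by the Lipschitz constant times $n^{-1/2}\|\bm v_A\|_2$; this still suffices, because that term is absorbed into the approximation terms of Step 3. Combining Steps 1--4 yields \eqref{eq:main-stage2-bound}.
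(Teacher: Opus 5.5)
Your Steps 1--3 match the paper's proof: Lemma~\ref{lem:oracle-events} and Theorem~\ref{thm:oracle-reduction} give the oracle reduction, and the approximation error is then split into $(\widehat{\mathbf P}_A-\mathbf P_A)\bm y^*$ and $(\mathbf I_n-\mathbf P_A)\bm y^*$. Your exact Pythagorean identity slightly sharpens the paper's factor-$3$ triangle bound. The paper bounds the estimation term by citing Lemma~\ref{lem:oracle-rate}. You instead re-derive that bound in Step 4, and the curvature claim in (b) is not correct as stated.

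Write $\bm x_{ij,A}=\widetilde{\bm u}_{i,A}-\widetilde{\bm u}_{j,A}$, $t_{ij}=\bm x_{ij,A}^\top\bm\delta$ and $v_{ij}=v_{A,i}-v_{A,j}$. Conditionally on $\mathbf W$, the expected shifted loss has Hessian $\frac{2}{n(n-1)}\sum_{i\ne j} g(v_{ij}-t_{ij})\,\bm x_{ij,A}\bm x_{ij,A}^\top$. Its curvature is therefore governed by $g$ at the shifts, not by $g(0)$. Assumption~\ref{ass:error} controls $g$ only on $[-r_0,r_0]$; elsewhere you only know $0\le g\le g(0)$, and $g$ can be arbitrarily small. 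As you note, $\bm v_A$ is not pointwise small. The only available bound, \eqref{eq:vA-bound}, has leading term $\Delta_{n,\kappa}\norm{\bm y^*}_2\asymp\sqrt n\,\Delta_{n,\kappa}$, which may diverge, and its second term is of order one when $q=1$. Saying that ``the Lipschitz condition handles the smooth part'' does not fix this. Your fallback does not fix it either, because it only bounds the linear score-bias term. Without a quadratic lower bound, the localization in (c) has nothing to work with.

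The missing idea is an \emph{averaged} control of the shift, which the signal-strength condition supplies. Take $t_0,c_1>0$ small, depending on $g_0,L_g,r_0$. For $|t_{ij}|\le t_0$ and $|v_{ij}|\le c_1$, the shifted quadratic term $2\int_0^{t_{ij}}(t_{ij}-s)\,g(s-v_{ij})\,ds$ is at least $(g_0/2)t_{ij}^2$, and on all other pairs it is nonnegative. The curvature lost is therefore at most $C\frac{1}{n(n-1)}\sum_{i\ne j}t_{ij}^2|v_{ij}|$. By H\"older, Lemma~\ref{lem:maxpair}, Lemma~\ref{lem:pairwise-id} and the centering of $\bm v_A$, this is at most $C\sqrt{|A|}\,\norm{\bm\delta}_2^2\,n^{-1/2}\norm{\bm v_A}_2$.

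So you need $\sqrt{|A|}\,n^{-1/2}\norm{\bm v_A}_2$ to be below a small constant, and this follows from \eqref{eq:oracle-event-cond}. Indeed $|A|\,C_\beta^2\eta_n^2\le\sum_{j\in A}\theta_j^{*2}\le\norm{\bm y^*}_2^2/n=O(1)$, and $n^{-1/2}\norm{\bm v_A}_2\lesssim\eta_n$ with high probability by \eqref{eq:vA-l2-bound}. The same inequality gives $|A|\lambda_0\lesssim C_\beta^{-1}$. You need that in (c) as well, so that the target radius $M(\sqrt{|A|/n}+n^{-1/2}\norm{\bm v_A}_2)$ fits inside your neighbourhood of radius $\asymp r_0/(C_U\sqrt{|A|})$; your sketch omits this check. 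The same smallness also absorbs the second-order remainder $C\sqrt{|A|}\,\norm{\bm\delta}_2\norm{\bm v_A}_2^2/n$ of the score-bias term.

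The paper's Lemma~\ref{lem:oracle-rate} expands around the unshifted differences $\varepsilon_i-\varepsilon_j$ instead, where Lemma~\ref{lem:curvature} applies directly, and treats the shift as an approximation error. With either expansion, an averaged bound of this kind is what keeps the quadratic lower bound alive.
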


The bound in \eqref{eq:main-stage2-bound} decomposes naturally into an estimation term and a contamination term. The estimation part $|A|/n+R_q\tau_\star^{2-q}$ is the oracle rate associated with the truncated active principal-components model. The last term is the perturbation price paid for using the empirical principal-components basis extracted from contaminated predictors. Theorem~\ref{thm:main-stage2} therefore shows that the second-stage rank procedure preserves the blessing-of-dimensionality structure identified by \citet{SongZou2026}, while replacing the first-stage lasso estimation term by its oracle analogue.

\section{Numerical Experiments}
\label{sec:simulation}

We compare two convex procedures. The first is the $\ell_1$-penalized least-squares estimator in empirical principal-components space from \citet{SongZou2026}, which we label \emph{L1PCR}. The second is the second-stage Wilcoxon rank estimator \eqref{eq:stage2}, which we label \emph{RPCR} to match the legends in the figures. Here we only choose the MCP penalty \citep{Zhang2010} in our simulation studies.

We consider the same settings as \cite{SongZou2026}.

\paragraph{Model 1.}
Let $\mathbf M$ be an $n\times p$ matrix with independent $N(0,1)$ entries, and let $\mathbf U$ and $\mathbf V$ be the left and right singular-vector matrices of $\mathbf M$. With $m=\min(n,p)$ and $A=\{1,\ldots,7\}$, define
\[
\mathbf X = a\mathbf U_A\mathbf V_A^\top + b\mathbf U_{A^c}\mathbf V_{A^c}^\top,
\qquad
a=\Big(0.9\,\frac{np}{7}\Big)^{1/2},
\qquad
b=\Big(0.1\,\frac{np}{m-7}\Big)^{1/2}.
\]
Set
\[
\bm\theta^*=(0.483,0,0.029,0.019,0,0.126,0.009,0,\ldots,0)^\top,
\qquad
\bm\gamma^*=\sqrt n\,\bm\theta^*.
\]
Thus the signal is concentrated on the first seven principal components, which is the same sparse principal-components design used by \citet{SongZou2026}.

\paragraph{Model 2.}
With the same preliminary random matrix $\mathbf M$, let $A=\{m-5,\ldots,m\}$ and define
\[
\mathbf X = a\mathbf U_A\mathbf V_A^\top + b\mathbf U_{A^c}\mathbf V_{A^c}^\top,
\qquad
a=(\kappa p)^{1/2},
\qquad
b=(2\kappa p)^{1/2}.
\]
All coordinates of $\bm\theta^*$ are set to $0.003$ except the last six, which are $0.009$, $0.125$, $0.003$, $0.019$, $0.029$, and $0.482$. Again $\bm\gamma^*=\sqrt n\,\bm\theta^*$. This model is more challenging because the important signals lie near the bottom of the principal-components spectrum and the coefficient vector is only weakly sparse.

As in \citet{SongZou2026}, the observable design is $\mathbf Z=\mathbf X+\mathbf W$. We consider three contamination regimes: $\mathbf W=\mathbf 0$ (no measurement error), independent measurement error with $\bm W \sim N(0,\mathbf I_p)$, and correlated measurement error with $\bm W\sim N(0,\mathbf{\Sigma})$ and $\Sigma_{ij}=0.5^{|i-j|}$.

For every configuration we report the average prediction error $n^{-1}\|\widehat{\bm y}-\bm y^*\|_2^2$ over 1{,}000 Monte Carlo replications. The response errors are generated as follows: (i) $\varepsilon_i\sim N(0,1)$; (ii) $\varepsilon_i\sim t_{3}/\sqrt 3$ so that the $t_3$ law is standardized to unit variance; and (iii) $\varepsilon_i=\widetilde\varepsilon_i/\sqrt{10.9}$ with $\widetilde\varepsilon_i\sim 0.9N(0,1)+0.1N(0,100)$, which yields a standardized mixture-normal distribution. The latter two settings represent increasingly heavy-tailed response contamination.

Figures~\ref{fig:model1p} and \ref{fig:model2p} summarize the effect of increasing the number of predictors. The most stable qualitative feature is the same one emphasized by \citet{SongZou2026}: under predictor contamination, the average prediction error decreases as $p$ grows. This is visible for both L1PCR and RPCR in both models and across all three response-error distributions. In other words, the blessing-of-dimensionality phenomenon persists after the squared loss is replaced by the rank loss.

The distinction between the two methods is driven primarily by the tail behavior of the response errors. When there is no measurement error, RPCR is uniformly competitive and typically more accurate than L1PCR, with especially pronounced gains under the standardized $t_3$ and standardized mixture-normal errors. Once predictor contamination is introduced, the Gaussian setting becomes more nuanced: L1PCR often retains a modest efficiency advantage under normal errors, particularly in the contaminated cases of both models. This is unsurprising because least squares is well aligned with a light-tailed Gaussian response model.

The pattern changes markedly under heavy-tailed errors. Under standardized $t_3$ errors, RPCR typically improves on L1PCR in the contaminated designs, and the advantage becomes clearer as $p$ increases. Under standardized mixture-normal errors, the gains are substantial. Across both models and under both independent and correlated measurement error, the rank-based method yields visibly smaller prediction error, with the gap generally widening at larger dimensions. The improvement is especially strong in Model~2, where the signal is weaker and concentrated near the lower end of the spectrum.

\begin{figure}[htbp]
  \centering
  \maybeincludegraphics[width=0.98\textwidth]{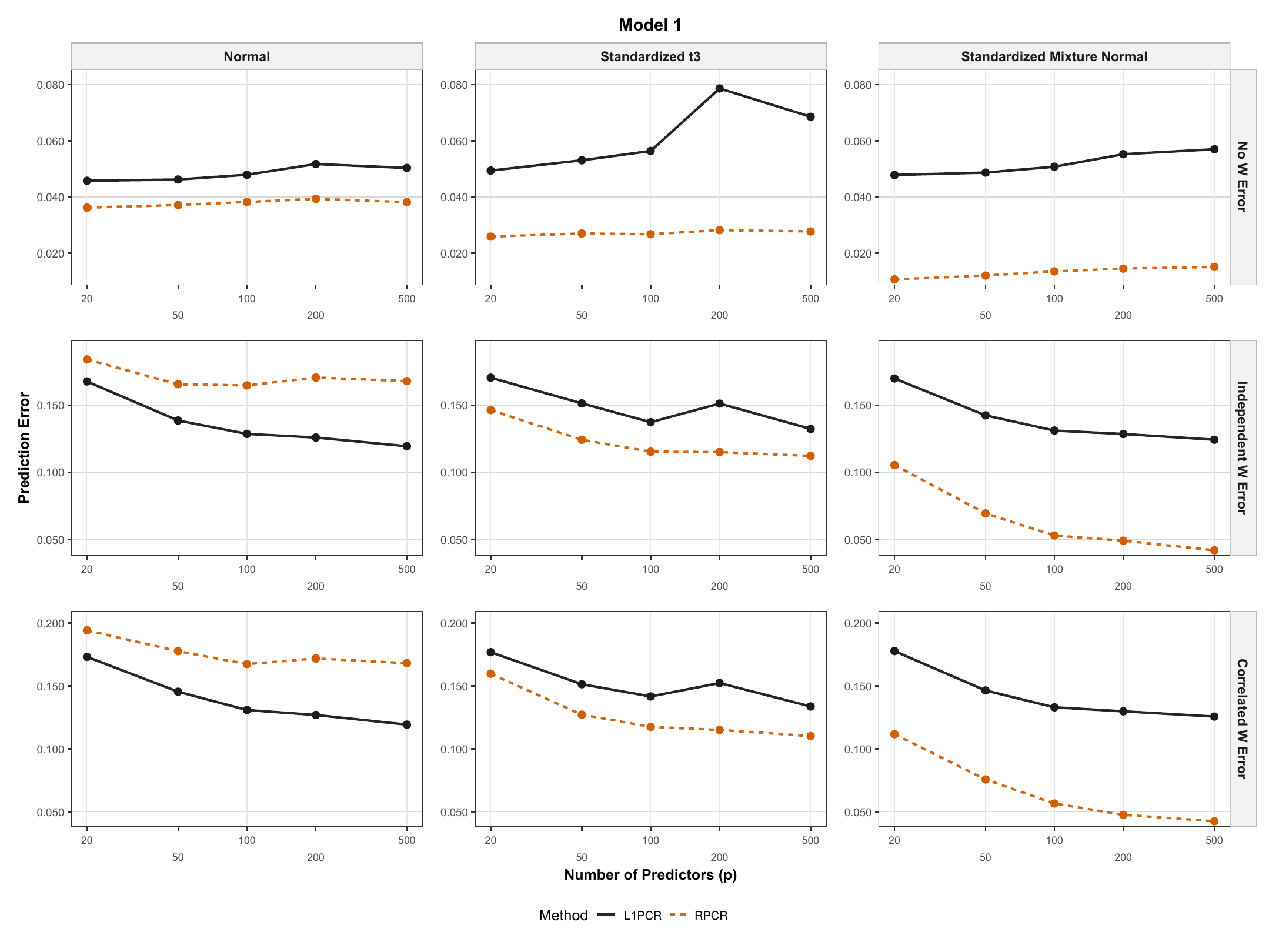}
  \caption{Model~1: average prediction error over 5{,}000 replications as a function of the number of predictors $p$. Rows correspond to no measurement error, independent measurement error, and correlated measurement error. Columns correspond to normal, standardized $t_3$, and standardized mixture-normal response errors.}
  \label{fig:model1p}
\end{figure}

\begin{figure}[htbp]
  \centering
  \maybeincludegraphics[width=0.98\textwidth]{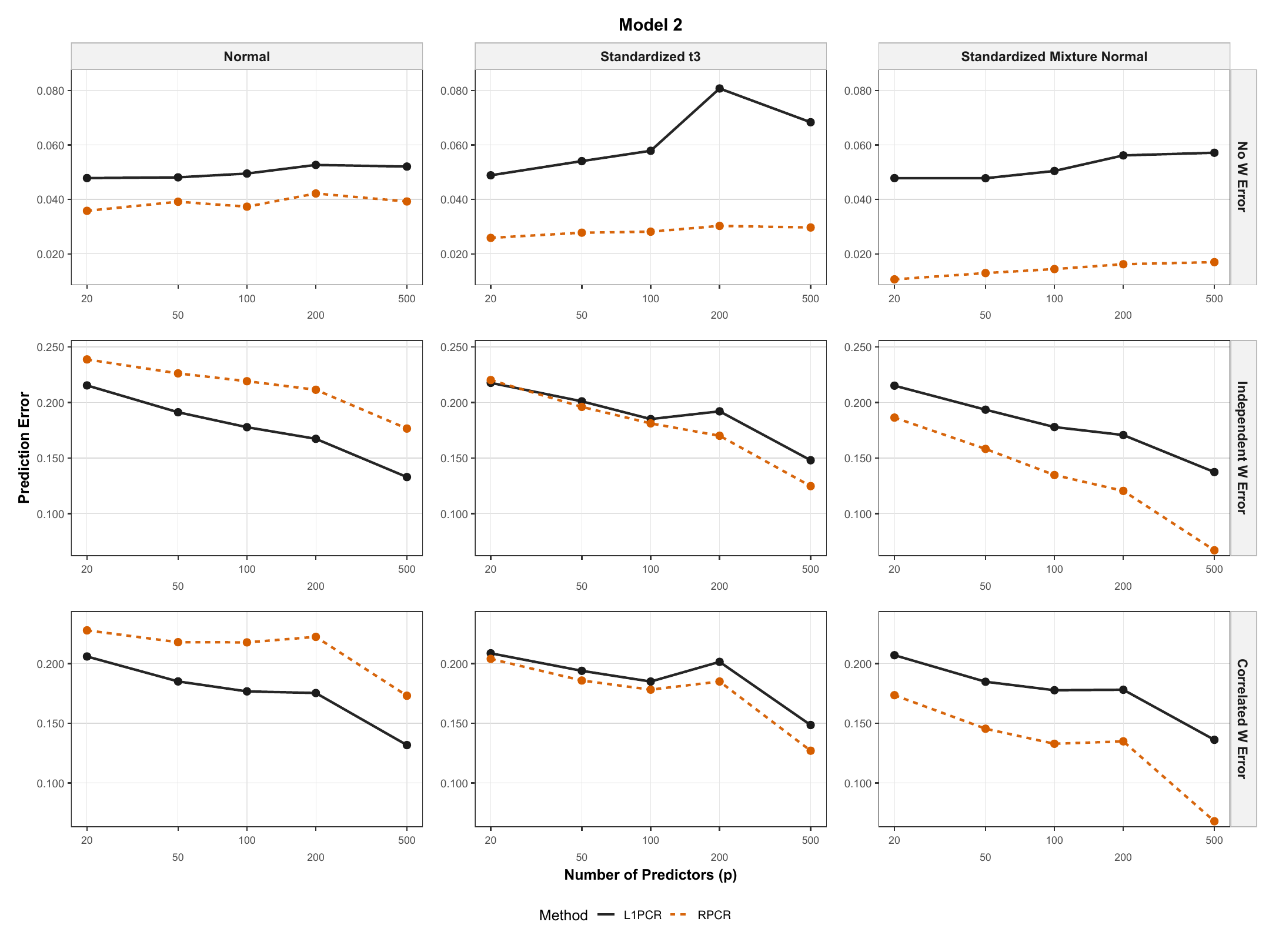}
  \caption{Model~2: average prediction error as a function of the number of predictors $p$. The design layout and error regimes are the same as in Figure~\ref{fig:model1p}.}
  \label{fig:model2p}
\end{figure}

The theory in Section~\ref{sec:theory} shows that the contamination term is inversely proportional to $\kappa^2$, so the eigengap should matter only when the predictors are contaminated. Figures~\ref{fig:model2k1} and \ref{fig:model2k2} confirm exactly that pattern. In the no-measurement-error row, the curves are essentially flat in $\kappa$ for both methods, which is consistent with the fact that the perturbation term disappears when $\mathbf W=\mathbf 0$. Under independent and correlated measurement error, however, increasing $\kappa$ lowers the prediction error for both L1PCR and RPCR.

The relative behavior of the two procedures again depends on tail thickness. Under normal response errors, the two methods are broadly comparable in the $\kappa$ experiments, and L1PCR is often slightly better in the contaminated settings. Under standardized $t_3$ errors, RPCR improves more quickly as $\kappa$ increases and eventually dominates in most contaminated cases. Under standardized mixture-normal errors, the difference is even sharper: the orange RPCR curve drops substantially faster than the black L1PCR curve, especially under correlated measurement error. The same qualitative message appears in both $\kappa$ plots, indicating that the interaction between a larger eigengap and a robust loss is stable across the two fixed-dimensional regimes.

\begin{figure}[htbp]
  \centering
  \maybeincludegraphics[width=0.98\textwidth]{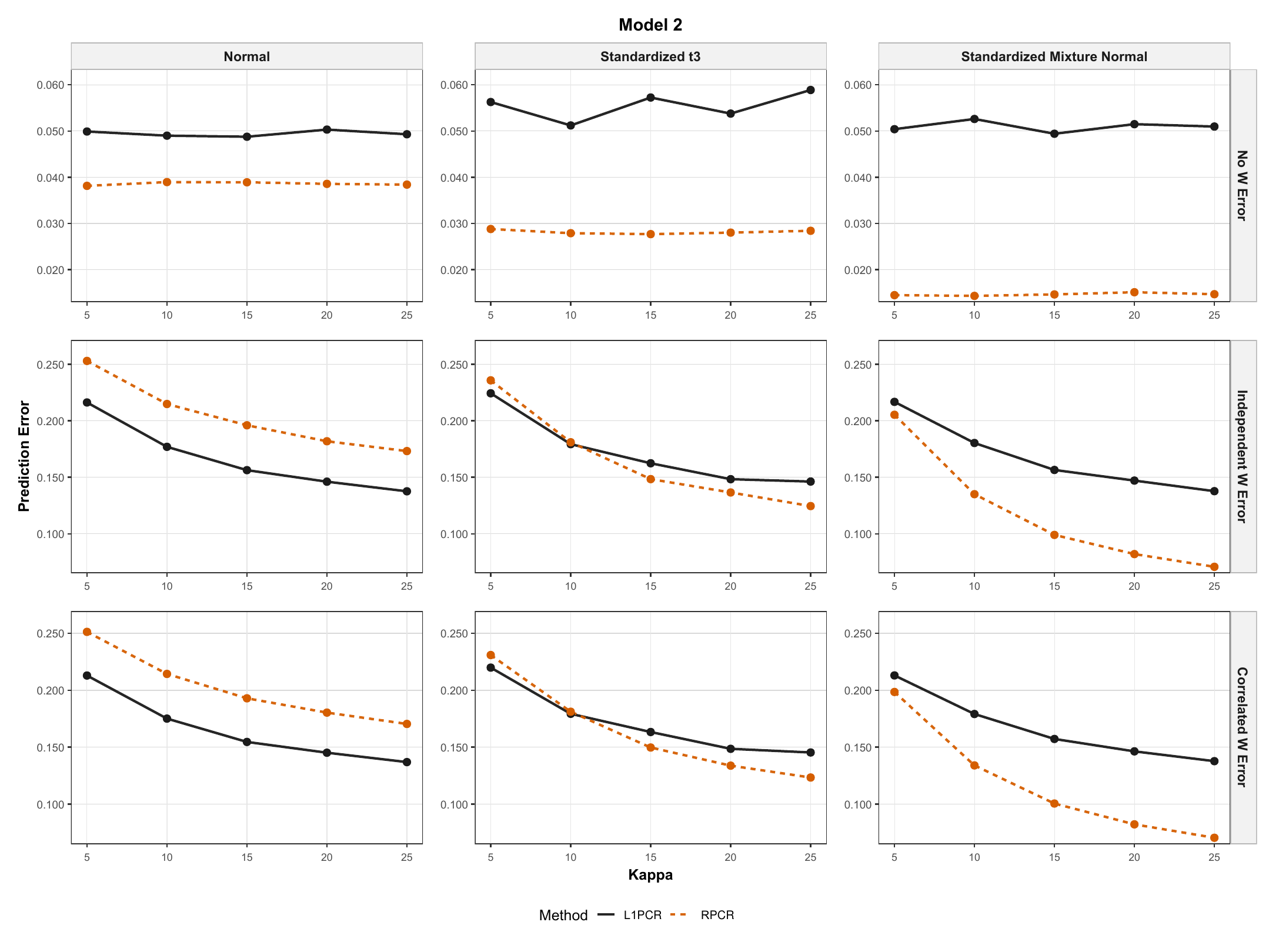}
  \caption{Model~2: average prediction error as a function of the eigengap parameter $\kappa$ in one representative fixed-$p$ regime. Rows and columns are organized as in Figure~\ref{fig:model1p}.}
  \label{fig:model2k1}
\end{figure}

\begin{figure}[htbp]
  \centering
  \maybeincludegraphics[width=0.98\textwidth]{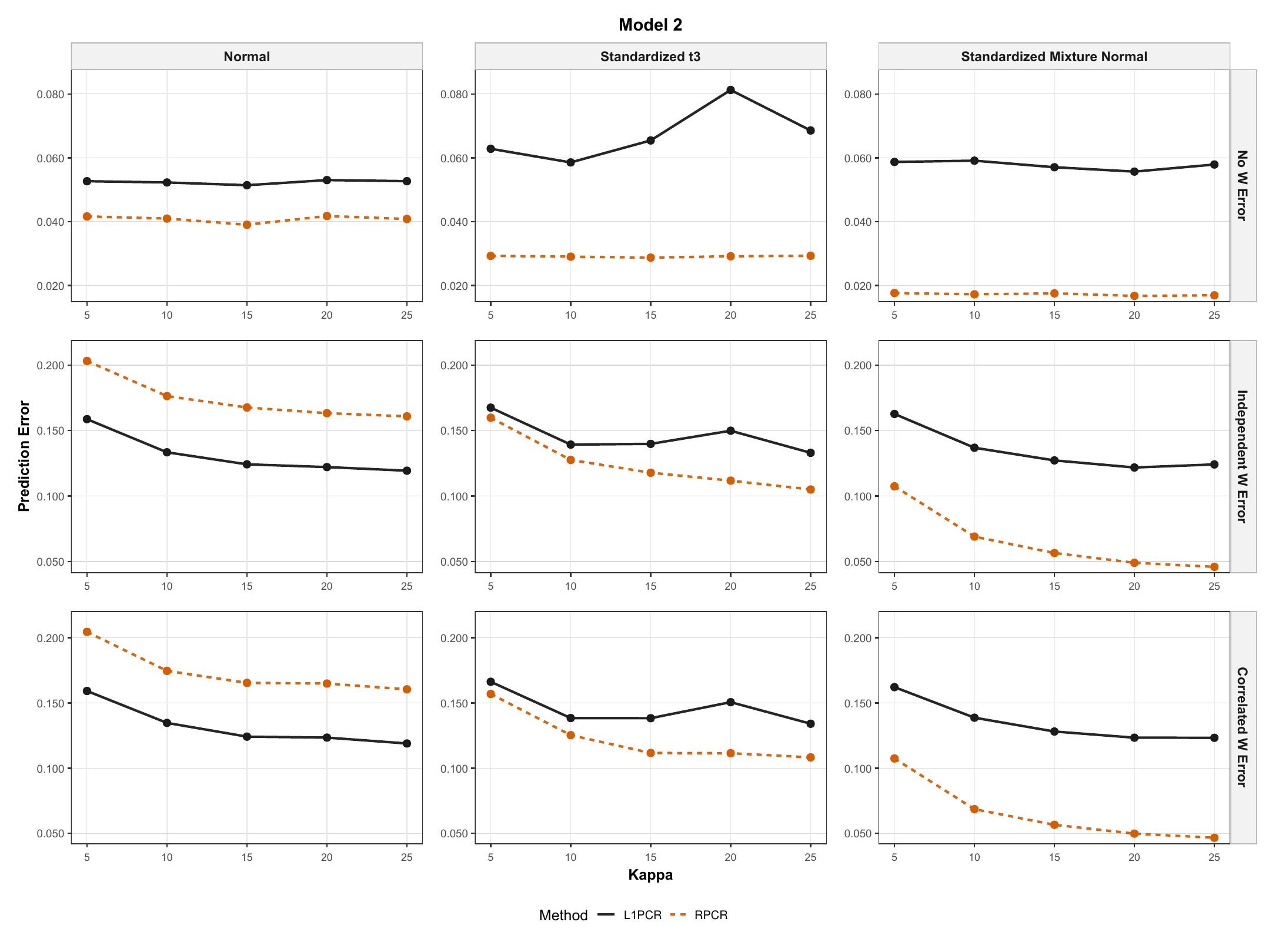}
  \caption{Model~2: average prediction error as a function of the eigengap parameter $\kappa$ in a second, larger fixed-$p$ regime. The qualitative behavior matches that of Figure~\ref{fig:model2k1}.}
  \label{fig:model2k2}
\end{figure}

\section{Real Data Application}\label{sec:data}

We next examine the finite-sample behavior of the proposed rank-based principal-components regression on a real genetic dataset. The data come from the eye eQTL study of \cite{Scheetz2006EyeDisease}, which was also used in the real-data illustration of \citet{WangEtAl2020}. The public \texttt{Scheetz2006} file contains normalized expression measurements for 120 twelve-week-old male rat offspring. Following \citet{WangEtAl2020}, we take the expression level of \texttt{TRIM32} as the response. In our sample the standard deviation of the response is $0.145$. The public file contains a much larger set of probes, so to match the 300-predictor setting used in the empirical illustration we retain the 300 probes having the largest absolute marginal correlation with \texttt{TRIM32}, and use these 300 probes as predictors.

Our data analysis follows the same evaluation strategy as the real-data experiment in \citet{SongZou2026}. Let $X$ denote the resulting $120 \times 300$ predictor matrix and let $y$ be the \texttt{TRIM32} response. For each contamination level $c \in \{0,0.1,0.2,0.3,0.4,0.5\}$, we generate a contaminated design matrix
\[
Z = X + W, \qquad W_{ij} \stackrel{\text{i.i.d.}}{\sim} N(0,\sigma^2),
\]
where
\[
\sigma = c \left\{\frac{1}{p}\sum_{j=1}^p \widehat{\mathrm{Var}}(X_j)\right\}^{1/2}.
\]
For each value of $c$, we evaluate prediction performance by leave-one-out cross-validation. In each split, the left singular vectors of the full contaminated design matrix are computed first, exactly as in \citet{SongZou2026}. We then compare three methods:
\begin{enumerate}
\item \textbf{L1PCR}: the $\ell_1$-penalized principal-components regression estimator of \citet{SongZou2026};
\item \textbf{LASSO}: the standard lasso fitted directly on the contaminated predictors;
\item \textbf{RPCR}: our proposed procedure, which first maps the data into the principal-components space and then applies stage-two rank lasso with an MCP refinement.
\end{enumerate}
For L1PCR and LASSO, the tuning parameter is chosen by ten-fold cross-validation within each training sample. For RPCR, we fit the rank-based estimator on the principal-components scores and use the stage-two MCP estimate for prediction. As in \citet{SongZou2026}, we report the average leave-one-out squared prediction error. We also report standard errors for pairwise differences in squared prediction errors across held-out observations.

\begin{table}[htbp]
\centering
\caption{Average leave-one-out squared prediction error for the Scheetz genetic dataset under additive measurement error. The columns labeled SE report the standard error of the corresponding pairwise difference in squared prediction error.}
\label{tab:scheetz_realdata}
\scriptsize
\begin{tabular}{lcccccc}
\hline
$c$ & L1PCR & LASSO & RPCR & SE(L1PCR-LASSO) & SE(L1PCR-RPCR) & SE(LASSO-RPCR) \\
\hline
0.0 & 0.00984 & \textbf{0.00684} & 0.01171 & 0.00206 & 0.00194 & 0.00391 \\
0.1 & 0.00817 & \textbf{0.00679} & 0.00703 & 0.00085 & 0.00140 & 0.00093 \\
0.2 & 0.00945 & 0.00830 & \textbf{0.00690} & 0.00090 & 0.00270 & 0.00209 \\
0.3 & 0.00809 & 0.00799 & \textbf{0.00676} & 0.00057 & 0.00146 & 0.00185 \\
0.4 & 0.00981 & 0.01086 & \textbf{0.00690} & 0.00096 & 0.00274 & 0.00357 \\
0.5 & 0.00977 & 0.00891 & \textbf{0.00742} & 0.00064 & 0.00260 & 0.00221 \\
\hline
\end{tabular}
\end{table}

Table~\ref{tab:scheetz_realdata} delivers the same qualitative message as our simulations. In the clean design ($c=0$) and under very mild contamination ($c=0.1$), LASSO achieves the smallest prediction error. However, once measurement error becomes moderate ($c \ge 0.2$), the proposed RPCR method uniformly attains the smallest average prediction error. The gain is particularly visible at the larger contamination levels. For example, when $c=0.4$, the prediction error of RPCR is $0.00690$, compared with $0.00981$ for L1PCR and $0.01086$ for LASSO. At $c=0.5$, RPCR still remains best, with prediction error $0.00742$. Thus, while the rank-based procedure is not uniformly best in the nearly clean design, it becomes the most stable method once the predictors are materially contaminated.

\begin{figure}[htbp]
\centering
\includegraphics[width=\textwidth]{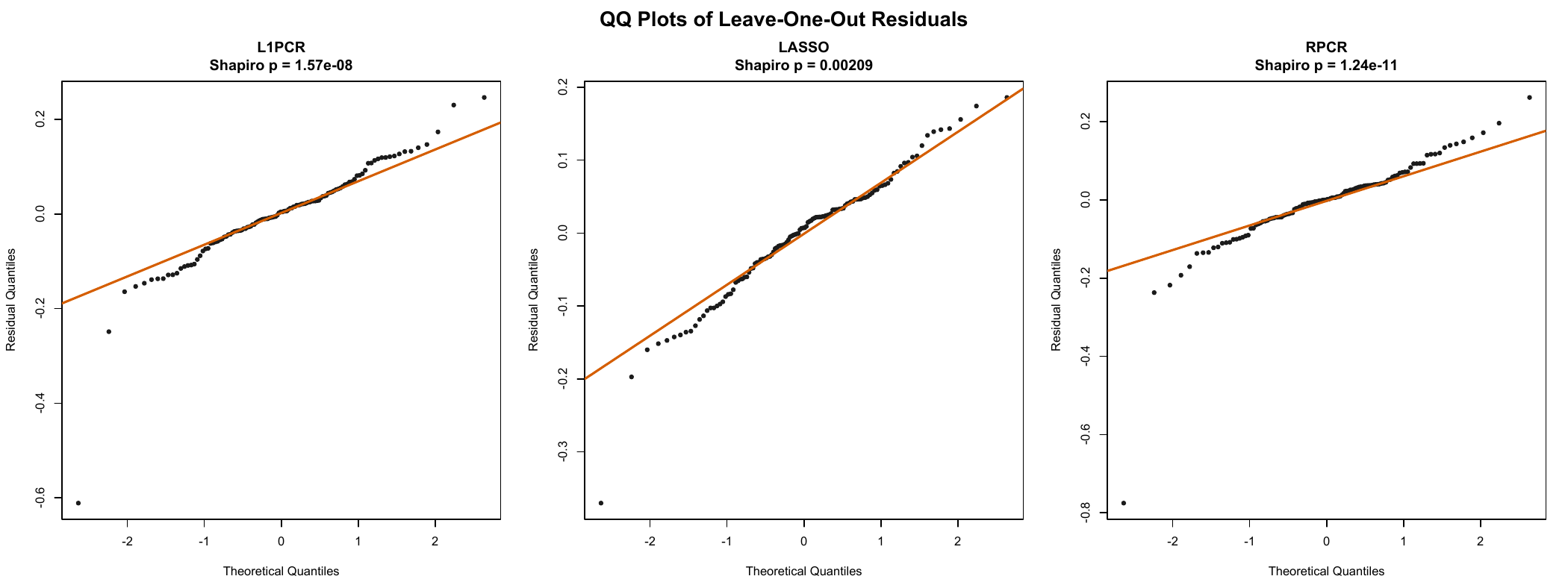}
\caption{QQ plots of leave-one-out residuals for L1PCR, LASSO, and RPCR at the clean design ($c=0$). The Shapiro--Wilk $p$-values are $1.57\times 10^{-8}$ for L1PCR, $0.0021$ for LASSO, and $1.24\times 10^{-11}$ for RPCR. All three tests reject normality, and the tail departures are especially pronounced for L1PCR and RPCR.}
\label{fig:scheetz_realdata_qq}
\end{figure}

Figure~\ref{fig:scheetz_realdata_qq} shows clear departures from Gaussianity in the leave-one-out residuals. All three Shapiro--Wilk $p$-values are below $0.01$, so the Gaussian error assumption is not well supported by the data. This diagnostic is important for interpretation: it indicates that a rank-based method is empirically well motivated in this application. Taken together, Table~\ref{tab:scheetz_realdata} and Figure~\ref{fig:scheetz_realdata_qq} suggest that the stage-two rank-based principal-components procedure is competitive in the nearly clean regime and becomes more reliable than both L1PCR and direct LASSO once measurement error is introduced into the predictors. This pattern is consistent with the main theme of the paper: robust estimation in principal-components space can deliver practical gains when the design is high-dimensional and the data depart from the ideal Gaussian benchmark.

\section{Discussion}
\label{sec:discussion}

This paper studies prediction in high-dimensional errors-in-variables regression when the signal is sparse in the empirical principal-components space. We propose a rank-based two-stage estimator that combines an $\ell_1$-penalized Wilcoxon pilot with a refitted second stage to reduce shrinkage bias while retaining the same eigengap-controlled contamination term as in the least-squares framework. Both the theory and the simulations show that the proposed method is especially advantageous under heavy-tailed errors, where it is substantially more stable and accurate than least-squares-based procedures.

Several extensions merit further investigation. One natural direction is to extend the present analysis beyond linear models to high-dimensional generalized linear models \citep{vandeGeer2008,JiangZhouLiuMa2023}, where the interaction among nonlinear links, rank-based loss functions, and predictor contamination would require new theory. It would also be of interest to study contaminated high-dimensional regression under other transformed predictor bases, such as principal-component-guided sparse regression \citep{TayFriedmanTibshirani2021}, so that the structural assumption can better adapt to the geometry of the signal. These questions are left for future work.

\appendix

\section{Auxiliary lemmas}

\begin{proposition}[Jaeckel representation]
\label{prop:jaeckel}
Let $\bm r=(r_1,\ldots,r_n)^\top$ be any residual vector. Denote by $R_i(\bm r)$ the rank of $r_i$ among $r_1,\ldots,r_n$. Then
$$
\sum_{i=1}^n \Big(R_i(\bm r)-\frac{n+1}{2}\Big) r_i
=
\frac12\sum_{1\le i<j\le n} |r_i-r_j|.
$$
Consequently, minimizing $\mathcal Q_n(\bm\theta;\widehat{\mathbf U})$ is equivalent to minimizing Jaeckel's dispersion with Wilcoxon scores computed from the empirical principal-components design $\widetilde{\mathbf U}$.
\end{proposition}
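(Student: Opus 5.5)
The identity is purely combinatorial, so I would prove it by first reducing to the case of distinct residuals and then reorganizing a double sum; the equivalence claim then follows by substitution.

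\emph{Step 1 (no ties).} Assume first that the $r_i$ are distinct, so that
\[
R_i(\bm r)=1+\#\{j\neq i: r_j<r_i\}.
\]
For each index $i$, write $a_i=\#\{j:r_j<r_i\}$ and $b_i=\#\{j:r_j>r_i\}$. Then $a_i+b_i=n-1$, and therefore
\[
R_i-\frac{n+1}{2}=a_i+1-\frac{a_i+b_i+2}{2}=\frac{a_i-b_i}{2}=\frac12\sum_{j\neq i}\sgn(r_i-r_j).
\]

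\emph{Step 2 (symmetrize).} Multiplying by $r_i$ and summing over $i$ gives
\[
\sum_i\Big(R_i-\frac{n+1}{2}\Big)r_i=\frac12\sum_{i\neq j}\sgn(r_i-r_j)\,r_i .
\]
Each unordered pair $\{i,j\}$ appears twice in this sum. Its two ordered terms combine as
\[
\sgn(r_i-r_j)\,r_i+\sgn(r_j-r_i)\,r_j=\sgn(r_i-r_j)(r_i-r_j)=|r_i-r_j|.
\]
Hence the right-hand side equals $\frac12\sum_{i<j}|r_i-r_j|$, which is the claimed identity.

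\emph{Step 3 (ties).} Here I must fix a convention, and this is the only real obstacle. I would use average ranks (midranks), $R_i=1+\#\{j:r_j<r_i\}+\tfrac12\#\{j\ne i:r_j=r_i\}$. With this definition, Step 1 still holds with $\sgn(0)=0$: the formula $R_i-\frac{n+1}{2}=\frac12\sum_{j\neq i}\sgn(r_i-r_j)$ remains exact. Tied pairs contribute $0$ on both sides, so Step 2 goes through unchanged.

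If arbitrary tie-breaking ranks are used instead, I would argue as follows. Within a tied block, $r_i$ is a common constant $c$, and the ranks assigned to the block are a permutation of the same integer set as under any other tie-breaking. Therefore $\sum_{i\in\text{block}}(R_i-\tfrac{n+1}{2})\,c$ is the same for every tie-breaking, and in particular equals its midrank value. So the identity holds for every convention.

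\emph{Step 4 (equivalence of objectives).} Substitute $r_i=y_i-\widetilde{\bm u}_i^\top\bm\theta$. Then
\[
\sum_{i\ne j}|r_i-r_j|=2\sum_{i<j}|r_i-r_j|,
\]
so that
\[
\mathcal Q_n(\bm\theta;\widehat{\mathbf U})=\frac{4}{n(n-1)}\sum_i\Big(R_i-\frac{n+1}{2}\Big)r_i .
\]
The right-hand side is a positive constant times Jaeckel's Wilcoxon dispersion computed with design $\widetilde{\mathbf U}$. Scaling an objective by a positive constant does not change its minimizers, so the two problems have the same minimizers. Adding the same $\ell_1$ or weighted $\ell_1$ penalty (with $\lambda$ rescaled by that constant) also leaves the equivalence intact.
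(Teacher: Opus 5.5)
Your proof is correct and is essentially the paper's argument in different notation: your identity $R_i-\frac{n+1}{2}=\frac12\sum_{j\ne i}\sgn(r_i-r_j)$ is exactly the coefficient $k-\frac{n+1}{2}=\frac12\{(k-1)-(n-k)\}$ that the paper's summation by parts attaches to the order statistic $r_{(k)}$. Likewise, your pairing of ordered terms plays the role of $r_{(j)}-r_{(i)}=|r_{(j)}-r_{(i)}|$ for $i<j$. What you add is Step~3, which shows explicitly that the identity holds under midranks or under any tie-breaking rule; the paper's sorted-order proof leaves this point implicit.
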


\begin{proof}
Let $r_{(1)}\le\cdots\le r_{(n)}$ be the order statistics. Then
$$
\sum_{i=1}^n \Big(R_i(\bm r)-\frac{n+1}{2}\Big)r_i
=
\sum_{k=1}^n \Big(k-\frac{n+1}{2}\Big)r_{(k)}.
$$
A direct summation-by-parts argument gives
$$
\sum_{k=1}^n \Big(k-\frac{n+1}{2}\Big)r_{(k)}
=
\frac12\sum_{1\le i<j\le n}(r_{(j)}-r_{(i)})
=
\frac12\sum_{1\le i<j\le n}|r_i-r_j|.
$$
Dividing by $n(n-1)$ only rescales the objective and does not change the minimizer.
\end{proof}

\begin{lemma}[Tail bounds for weak sparsity]
\label{lem:tail}
Let $\bm a\in\R^m$ satisfy $\sum_{k=1}^m |a_k|^q\le R_q$ with $q\in[0,1]$. For any $\tau>0$, define
$$
B(\tau)=\{k:|a_k|\ge \tau\}.
$$
Then
$$
|B(\tau)| \le R_q\tau^{-q},
\qquad
\sum_{k\notin B(\tau)} a_k^2 \le R_q\tau^{2-q},
\qquad
\sum_{k\notin B(\tau)} |a_k| \le R_q\tau^{1-q}.
$$
\end{lemma}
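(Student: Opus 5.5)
Since $|B(\tau)|$ counts indices and the remaining two sums run over the complement, I would prove each of the three inequalities by comparing the relevant coordinatewise quantity with $|a_k|^q$, using only the definition of $B(\tau)$ and $q\in[0,1]$.

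\textbf{Cardinality bound.} For $q\in(0,1]$, every $k\in B(\tau)$ satisfies $|a_k|\ge\tau$, so $|a_k|^q/\tau^q\ge1$. Summing over $B(\tau)$ gives
\[
|B(\tau)|\le\tau^{-q}\sum_{k\in B(\tau)}|a_k|^q\le R_q\tau^{-q}.
\]
For $q=0$, the sparsity constraint is read as $\sum_k |a_k|^0 = \#\{k: a_k\ne0\}\le R_0$, with the convention $0^0=0$. Since $B(\tau)$ lies inside the support of $\bm a$, we get $|B(\tau)|\le R_0=R_0\tau^0$.

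\textbf{Tail bounds.} For $k\notin B(\tau)$ we have $|a_k|<\tau$.
\begin{itemize}[leftmargin=*]
\item When $a_k\neq0$ and $s\in\{1,2\}$, write $|a_k|^s=|a_k|^q|a_k|^{s-q}$. Because $s-q\ge0$, this is at most $|a_k|^q\tau^{s-q}$.
\item When $a_k=0$, both sides vanish under the same convention, so the case $q=0$ is covered as well.
\end{itemize}
Summing over $k\notin B(\tau)$ and using $\sum_k|a_k|^q\le R_q$ yields
\[
\sum_{k\notin B(\tau)}a_k^2\le R_q\tau^{2-q}
\quad\text{and}\quad
\sum_{k\notin B(\tau)}|a_k|\le R_q\tau^{1-q}.
\]

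There is no real obstacle here. The only point needing care is the $q=0$ convention: $\ell_0$ counts nonzeros, and the factorisation $|a_k|^s=|a_k|^q|a_k|^{s-q}$ must be applied only to nonzero $a_k$. Zero entries contribute nothing to any of the sums.
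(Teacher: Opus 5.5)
Your proof is correct and follows essentially the same route as the paper, which also derives the cardinality bound from $|B(\tau)|\tau^q\le\sum_k|a_k|^q$ and the tail bounds from $|a_k|^r=|a_k|^q|a_k|^{r-q}\le\tau^{r-q}|a_k|^q$ with $r\in\{1,2\}$. Your explicit treatment of the $q=0$ convention (counting nonzeros, and factorizing only over nonzero $a_k$) is a small added care that the paper's proof leaves implicit.
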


\begin{proof}
The cardinality bound follows from $|B(\tau)|\tau^q\le \sum_k|a_k|^q$. The other two bounds follow from
$$
\sum_{k\notin B(\tau)}|a_k|^r
=
\sum_{k\notin B(\tau)}|a_k|^q|a_k|^{r-q}
\le
\tau^{r-q}\sum_{k\notin B(\tau)}|a_k|^q,
$$
with $r=2$ and $r=1$, respectively.
\end{proof}

\begin{lemma}[Local quadratic curvature]
\label{lem:curvature}
Let $\xi=\varepsilon_1-\varepsilon_2$, where $\varepsilon_1$ and $\varepsilon_2$ are independent copies of $\varepsilon$. Under Assumption~\ref{ass:error}, there exist constants $0<c_\ell<c_u<\infty$ and $t_0\in(0,r_0]$ such that for all $|t|\le t_0$,
$$
c_\ell t^2
\le
\E\big\{ |\xi-t|-|\xi| \big\}
\le
c_u t^2.
$$
\end{lemma}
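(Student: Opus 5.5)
The plan is to write the expectation as an integral against the density $g$ of $\xi$ and expand it to second order around $t=0$. Fix $t>0$; the case $t<0$ is symmetric after replacing $\xi$ by $-\xi$, which has the same law since $\varepsilon_1-\varepsilon_2\stackrel{d}{=}\varepsilon_2-\varepsilon_1$. A direct case check gives the identity
\[
|x-t|-|x| = t\,\{\1(x\le 0)-\1(x>0)\} + 2(t-x)\1(0<x<t),
\]
valid for all real $x$. Taking expectations and using that $\xi$ is symmetric with a continuous density, so $\Pbb(\xi\le 0)=\Pbb(\xi>0)=1/2$, the linear term vanishes. This leaves
\[
\E\{|\xi-t|-|\xi|\} = 2\int_0^t (t-x)\,g(x)\,dx .
\]
Note that the symmetry of $\xi$ holds automatically from the i.i.d. structure. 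The median-zero assumption on $\varepsilon$ is therefore not even needed here.

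The second step bounds this integral using the Lipschitz condition in Assumption~\ref{ass:error}. For $0<x<t\le r_0$ we have $g(0)-L_g x\le g(x)\le g(0)+L_g x$. Since $2\int_0^t (t-x)\,dx=t^2$ and $2\int_0^t (t-x)x\,dx=t^3/3$, this yields
\[
g(0)t^2-\tfrac{L_g}{3}t^3 \;\le\; \E\{|\xi-t|-|\xi|\} \;\le\; g(0)t^2+\tfrac{L_g}{3}t^3 .
\]
Now choose $t_0=\min\{r_0,\,3g_0/(2L_g)\}$. For $|t|\le t_0$ this gives $L_g|t|/3\le g_0/2$, so we may take $c_\ell=g_0/2$, since $g(0)\ge g_0$. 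For the upper constant we take $c_u=g(0)+g_0/2$.

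The only point requiring care is the finiteness of $c_u$, i.e.\ $g(0)<\infty$. This holds because $g(0)=\int f_\varepsilon^2$, which is finite whenever $f_\varepsilon$ is bounded. If boundedness is not granted, I would argue instead that $g$ is finite near $0$ because it is Lipschitz at $0$ with a finite value there, the Lipschitz condition implicitly presupposing that $g(0)$ is finite. Alternatively, one can simply record $c_u=\tfrac32 g(0)$, which is a constant under the assumption as stated. Note also that $\E|\xi|$ may be infinite because no moments are assumed. This is harmless, since the difference $|\xi-t|-|\xi|$ is bounded by $|t|$ and hence integrable. For that reason the argument works throughout with the difference itself rather than with each term separately.
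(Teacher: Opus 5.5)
Your proposal is correct and follows the paper's route: reduce by symmetry to $\E\{|\xi-t|-|\xi|\}=2\int_0^{|t|}(|t|-s)g(s)\,ds$, then sandwich $g$ near $0$ with the Lipschitz bound. You get that integral from a pointwise identity, whereas the paper uses $\phi'(0)=0$ and $\phi''=2g$. Your upper constant $c_u=g(0)+g_0/2$ is the right one; the paper's stated $\frac{3g_0}{2}$ silently replaces $g(0)$ by its lower bound $g_0$.
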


\begin{proof}
Define $\phi(t)=\E\{|\xi-t|-|\xi|\}$. Since the density of $\xi$ is $g$, we have $\phi'(0)=0$ and $\phi''(t)=2g(t)$ for $|t|<r_0$. By symmetry of $\xi$,
$$
\phi(t)=2\int_0^{|t|} (|t|-s)g(s)\,ds.
$$
Using the Lipschitz bound on $g$, choose $t_0$ so that $g_0-L_g t_0\ge g_0/2$ and $g_0+L_g t_0\le 3g_0/2$. Then for $|t|\le t_0$,
$$
\frac{g_0}{2}t^2 \le \phi(t) \le \frac{3g_0}{2}t^2.
$$
This gives the result.
\end{proof}

\begin{lemma}[Score concentration]
\label{lem:score}
Let
$$
\mathbf S_n
=
\frac{1}{n(n-1)}\sum_{i\ne j}
(\widetilde{\bm u}_i-\widetilde{\bm u}_j)
\sgn(\varepsilon_i-\varepsilon_j).
$$
Under Assumptions~\ref{ass:center} and \ref{ass:error}, there exist positive constants $C_1$ and $C_2$ such that
$$
\Pbb\bigg( \norm{\mathbf S_n}_\infty > C_1\sqrt{\frac{\log m}{n}} \bigg)
\le 2m^{-C_2}.
$$
\end{lemma}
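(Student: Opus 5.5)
The plan is to exploit the structure of $\mathbf S_n$ as a linear rank statistic whose randomness comes only from $\bm\varepsilon$, treating $\widehat{\mathbf U}$ as fixed. By Assumption~\ref{ass:error}, $\bm\varepsilon$ is independent of $\mathbf W$ and $\mathbf X$ is fixed. Hence $\widehat{\mathbf U}$, which is a function of $\mathbf Z=\mathbf X+\mathbf W$ only, is independent of $\bm\varepsilon$. So it suffices to prove the bound conditionally on $\widehat{\mathbf U}$, on the event from Assumption~\ref{ass:center} where $\max_{i,k}|\widetilde U_{ik}|\le C_U$ and $\1_n^\top\widetilde{\mathbf U}=\bm 0$. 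The contribution of the complement event is absorbed, as the assumption is stated with probability tending to one. If an exact $m^{-C_2}$ bound is required, this step would need the coherence event to hold with a comparable probability.

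The first key step is the Jaeckel-type rewriting in the spirit of Proposition~\ref{prop:jaeckel}. Since $\varepsilon_i$ has a continuous density, ties occur with probability zero, and
\[
\sum_{j\ne i}\sgn(\varepsilon_i-\varepsilon_j)=2R_i-(n+1),
\]
where $R_i$ is the rank of $\varepsilon_i$. Writing the $k$th coordinate of $\mathbf S_n$ as a double sum and splitting it gives
\[
S_{n,k}=\frac{2}{n(n-1)}\sum_{i=1}^n \widetilde U_{ik}\bigl(2R_i-(n+1)\bigr).
\]
Because the $\varepsilon_i$ are i.i.d.\ and continuous, $(R_1,\dots,R_n)$ is a uniformly random permutation, independent of $\widehat{\mathbf U}$. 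So $\mathbf S_n$ has exactly the distribution of $-\widehat{\mathbf S}_n^{\circ}$ used in \eqref{eq:lambda0-sim}. Note that this does not even require the median-zero assumption.

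The second step is a concentration bound for each coordinate. For fixed $k$, $T_k=\sum_i a_i b_{\pi(i)}$ with $a_i=\widetilde U_{ik}$ and $b_r=2r-(n+1)$ is a linear permutation statistic. Its mean is $n\bar a\bar b=0$, using $\sum_i a_i=0$ (and $\sum_r b_r=0$). I would apply a Hoeffding-type inequality for sampling without replacement, or the bounded-differences/martingale inequality for random permutations of \citet{WangEtAl2020}. Alternatively, I would use the standard fact that $T_k$ is sub-Gaussian with variance proxy of order $n^{-1}\sum_i a_i^2\sum_r b_r^2$. Here $\sum_i a_i^2=n$ exactly, since $\widehat{\mathbf U}$ has orthonormal columns, and $\sum_r b_r^2\asymp n^3$. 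This yields $\Pbb(|T_k|>t)\le 2\exp(-ct^2/n^3)$, and we can also use $|a_i|\le C_U$ and $|b_r|\le n$ if a bounded-differences route is taken. Since $S_{n,k}=2T_k/\{n(n-1)\}$, taking $t\asymp n^{3/2}\sqrt{\log m}$ gives
\[
\Pbb\bigl(|S_{n,k}|>C_1\sqrt{\log m/n}\bigr)\le 2m^{-(C_2+1)}.
\]

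The final step is a union bound over $k=1,\dots,m$, which gives the total probability $2m^{-C_2}$.

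The main obstacle is the permutation concentration step. Hoeffding's comparison theorem (sampling without replacement is dominated by sampling with replacement in convex order) handles it cleanly: it lets $T_k$ be treated as a sum of i.i.d.\ bounded terms with $|a_ib_r|\le C_Un$. This gives a sub-Gaussian tail with proxy of order $n\cdot C_U^2n^2=C_U^2n^3$, which is exactly the rate needed.
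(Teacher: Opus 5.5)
Your reduction to the linear rank statistic is correct; it is exactly Lemma~\ref{lem:perm-score}. The overall plan is a genuinely different route from the paper's. The paper never passes to ranks here. It conditions on $\widehat{\mathbf U}$ and applies Hoeffding's inequality for bounded U-statistics directly to the kernel $(\widetilde U_{ik}-\widetilde U_{jk})\sgn(\varepsilon_i-\varepsilon_j)$, which is bounded by $2C_U$ and has conditional mean zero by exchangeability of $\varepsilon_i,\varepsilon_j$; that is a one-line argument. Your route costs a permutation concentration inequality. In return it gives the exact, pivotal law of $\mathbf S_n$ given $\widehat{\mathbf U}$: it coincides in distribution with the simulated statistic behind \eqref{eq:lambda0-sim}, which is what justifies that calibration.

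The tool you single out for the main obstacle, however, does not work as stated. Hoeffding's convex-order comparison applies to the \emph{unweighted} sum of a sample drawn without replacement. Your $T_k=\sum_i a_ib_{\pi(i)}$ is a weighted sum with weights of both signs, and there the domination fails. Write $\bm a=(a_i)$ and $\bm b=(b_r)$, with $\sum_ia_i=\sum_rb_r=0$, and let $Y_1,\ldots,Y_n$ be drawn with replacement from $\{b_r\}$. Then
\[
\operatorname{Var}(T_k)=\frac{\norm{\bm a}_2^2\norm{\bm b}_2^2}{n-1}
>\frac{\norm{\bm a}_2^2\norm{\bm b}_2^2}{n}
=\operatorname{Var}\Big(\sum_{i}a_iY_i\Big),
\]
so $f(x)=x^2$ is already a counterexample. (The terms $a_iY_i$ are also not identically distributed.) Similarly, a sub-Gaussian proxy of order $n^{-1}\sum_ia_i^2\sum_rb_r^2$ is not a general fact for linear permutation statistics; it fails for spiky $\bm a,\bm b$ and needs the boundedness available here.

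The repair is the other option you mention, the bounded-differences inequality on the symmetric group. Swapping $\pi(i)$ and $\pi(j)$ changes $T_k$ by $(a_i-a_j)(b_{\pi(j)}-b_{\pi(i)})$, which is at most $L=4C_U(n-1)$ in absolute value. Hence the Doob martingale obtained by revealing $\pi(1),\pi(2),\ldots$ has conditional ranges at most $L$, and Azuma--Hoeffding gives $\Pbb(|T_k|\ge t)\le 2\exp\{-2t^2/(nL^2)\}$. With $t=\tfrac12 n(n-1)C_1\sqrt{\log m/n}$ this equals $2m^{-C_1^2/(32C_U^2)}$, and the union bound closes for $C_1$ large.

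Your caveat that the coherence event holds only with probability tending to one is fair; the paper's proof has the same issue.
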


\begin{proof}
Condition on $\widehat{\mathbf U}$. For each coordinate $k$, the kernel
$$
h_k\big((\varepsilon_i,\widetilde{\bm u}_i),(\varepsilon_j,\widetilde{\bm u}_j)\big)
=
(\widetilde U_{ik}-\widetilde U_{jk})\sgn(\varepsilon_i-\varepsilon_j)
$$
is bounded by $2C_U$ under Assumption~\ref{ass:center}. Hoeffding's concentration inequality for bounded U-statistics therefore yields
$$
\Pbb\big( |S_{n,k}|>t \mid \widehat{\mathbf U} \big)
\le 2\exp\Big( -\frac{c n t^2}{C_U^2} \Big)
$$
for some universal constant $c>0$. A union bound over $k=1,\ldots,m$ gives the claim.
\end{proof}

\begin{lemma}[Permutation representation of the score]
\label{lem:perm-score}
Let $R_i$ be the rank of $\varepsilon_i$ among $\varepsilon_1,\ldots,\varepsilon_n$, and define
$$
\eta_i = 2R_i-(n+1),
\qquad
\bm\eta=(\eta_1,\ldots,\eta_n)^\top.
$$
Then
$$
\mathbf S_n
=
\frac{2}{n(n-1)}\widetilde{\mathbf U}^\top\bm\eta,
\qquad
\mathbf S_{n,A}
=
\frac{2}{n(n-1)}\widetilde{\mathbf U}_A^\top\bm\eta.
$$
Conditional on $\widehat{\mathbf U}$, the vector $\bm\eta$ is a uniform random permutation of
$$
(-(n-1),-(n-3),\ldots,n-3,n-1)^\top.
$$
Moreover,
\begin{equation}
\E\!\left(\bm\eta\bm\eta^\top \mid \widehat{\mathbf U}\right)
=
\frac{n(n+1)}{3}
\left(
\mathbf I_n-\frac{1}{n}\1_n\1_n^\top
\right).
\label{eq:eta-cov}
\end{equation}
\end{lemma}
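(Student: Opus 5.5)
The lemma has three claims: the algebraic identity for $\mathbf S_n$, the distribution of $\bm\eta$, and the covariance formula \eqref{eq:eta-cov}. I would handle them in that order.

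\textbf{The identity for $\mathbf S_n$.} First I use the centering $\1_n^\top\widehat{\mathbf U}=\bm 0$ from Assumption~\ref{ass:center}. Since the $\varepsilon_i$ have a continuous density, ties occur with probability zero. Hence for each $i$,
\[
\sum_{j\neq i}\sgn(\varepsilon_i-\varepsilon_j)=(R_i-1)-(n-R_i)=2R_i-(n+1)=\eta_i .
\]
Next I split $\sum_{i\ne j}(\widetilde{\bm u}_i-\widetilde{\bm u}_j)\sgn(\varepsilon_i-\varepsilon_j)$ into two sums. In the part involving $\widetilde{\bm u}_j$, I swap the indices $i\leftrightarrow j$ and use antisymmetry of $\sgn$. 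Both pieces then equal $\sum_i\widetilde{\bm u}_i\eta_i$, so the total is $2\widetilde{\mathbf U}^\top\bm\eta$. Dividing by $n(n-1)$ gives $\mathbf S_n=\frac{2}{n(n-1)}\widetilde{\mathbf U}^\top\bm\eta$. The formula for $\mathbf S_{n,A}$ follows by taking the rows indexed by $A$. The centering assumption is not actually needed for this identity, since $\sum_i\eta_i=0$ anyway. It does matter below, because it makes the sign of the score invariant to the convention $\xi$ versus $\eta$.

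\textbf{The conditional distribution of $\bm\eta$.} The vector $\bm\varepsilon$ is i.i.d. and continuous, hence exchangeable, so the rank vector $(R_1,\ldots,R_n)$ is uniform over permutations of $(1,\ldots,n)$. For conditioning on $\widehat{\mathbf U}$, I note that $\widehat{\mathbf U}$ is a measurable function of $\mathbf Z=\mathbf X+\mathbf W$. Here $\mathbf X$ is fixed and $\mathbf W$ is independent of $\bm\varepsilon$ by Assumptions~\ref{ass:error} and \ref{ass:W}. So the uniform law of the ranks is unchanged by conditioning on $\widehat{\mathbf U}$. Mapping $R_i\mapsto 2R_i-(n+1)$ sends $\{1,\ldots,n\}$ to $\{-(n-1),-(n-3),\ldots,n-1\}$, which gives the stated permutation law.

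\textbf{The covariance \eqref{eq:eta-cov}.} I compute the moments of a uniform random permutation of the values $a_k=2k-(n+1)$. These satisfy $\sum_k a_k=0$ and
\[
\sum_k a_k^2=4\sum_k\Big(k-\tfrac{n+1}{2}\Big)^2=4\cdot\frac{n(n^2-1)}{12}=\frac{n(n^2-1)}{3}.
\]
This gives $\E\eta_i^2=\frac{n^2-1}{3}$. For $i\ne j$,
\[
\E\eta_i\eta_j=\frac{(\sum_k a_k)^2-\sum_k a_k^2}{n(n-1)}=-\frac{n+1}{3}.
\]
Now $\frac{n(n+1)}{3}\bigl(1-\tfrac1n\bigr)=\frac{(n+1)(n-1)}{3}$ matches the diagonal, and $\frac{n(n+1)}{3}\cdot(-\tfrac1n)=-\frac{n+1}{3}$ matches the off-diagonal, which proves \eqref{eq:eta-cov}.

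\textbf{Main obstacle.} The only delicate point is the conditioning step. One must make sure that $\widehat{\mathbf U}$ depends only on $\mathbf W$, which holds because $\mathbf X$ is fixed. One must also confirm that the sign convention chosen for $\widehat{\bm u}_j$ uses only $\mathbf U$ and $\widehat{\mathbf U}$ and not $\bm y$, so that independence from $\bm\varepsilon$ is preserved. Everything else is routine algebra.
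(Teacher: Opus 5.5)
Your proposal is correct and follows the paper's proof essentially line for line: the row-sum identity $\sum_{j\ne i}\sgn(\varepsilon_i-\varepsilon_j)=2R_i-(n+1)$ combined with the index swap, uniformity of the ranks from exchangeability, and the explicit moments $(n^2-1)/3$ on the diagonal and $-(n+1)/3$ off the diagonal. You also spell out that conditioning on $\widehat{\mathbf U}$, which depends on $\mathbf W$ alone, leaves the rank law unchanged; this is a useful detail the paper leaves implicit. Your aside that centering is needed to reconcile the $\bm\xi$ and $\bm\eta$ sign conventions is unnecessary, because $n+1-\bm r$ is again a uniform permutation, so $-\bm\xi$ has the same law as $\bm\xi$ regardless.
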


\begin{proof}
For each $i$,
$$
\sum_{j\ne i}\sgn(\varepsilon_i-\varepsilon_j)
=
(R_i-1)-(n-R_i)
=
2R_i-(n+1)
=
\eta_i.
$$
Therefore
\begin{align*}
\mathbf S_n
&=
\frac{1}{n(n-1)}\sum_{i\ne j}
(\widetilde{\bm u}_i-\widetilde{\bm u}_j)\sgn(\varepsilon_i-\varepsilon_j)\\
&=
\frac{2}{n(n-1)}\sum_{i=1}^n \widetilde{\bm u}_i
\sum_{j\ne i}\sgn(\varepsilon_i-\varepsilon_j)
=
\frac{2}{n(n-1)}\widetilde{\mathbf U}^\top\bm\eta.
\end{align*}
Restricting the coordinates to $A$ gives the same formula for $\mathbf S_{n,A}$.

Because the errors are i.i.d. with continuous density, every ordering of $\varepsilon_1,\ldots,\varepsilon_n$ is equally likely. Hence $\bm\eta$ is a uniform random permutation of the deterministic centered score vector displayed above. Its coordinates have mean zero, marginal variance $(n^2-1)/3$, and pairwise covariance $-(n+1)/3$. This is exactly the matrix formula in \eqref{eq:eta-cov}.
\end{proof}

\begin{lemma}[Restricted score norm]
\label{lem:scoreA}
Let
$$
\mathbf S_{n,A}
=
\frac{1}{n(n-1)}\sum_{i\ne j}
(\widetilde{\bm u}_{i,A}-\widetilde{\bm u}_{j,A})
\sgn(\varepsilon_i-\varepsilon_j).
$$
Under Assumptions~\ref{ass:center} and \ref{ass:error}, there exist positive constants $C_1$ and $C_2$ such that
$$
\Pbb\bigg( \norm{\mathbf S_{n,A}}_2 > C_1\sqrt{\frac{|A|}{n}} \bigg)
\le 2e^{-C_2|A|}.
$$
\end{lemma}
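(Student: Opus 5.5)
We prove Lemma~\ref{lem:scoreA} by combining the permutation representation of Lemma~\ref{lem:perm-score} with a concentration inequality for a Lipschitz function of a uniform random permutation, conditionally on $\widehat{\mathbf U}$. By Lemma~\ref{lem:perm-score},
\[
\mathbf S_{n,A}=\frac{2}{n(n-1)}\widetilde{\mathbf U}_A^\top\bm\eta,
\]
where, given $\widehat{\mathbf U}$, $\bm\eta$ is a uniform permutation of a fixed centered score vector. Since $\bm\varepsilon$ is independent of $\mathbf W$, it is also independent of $\widehat{\mathbf U}$. It therefore suffices to prove the bound conditionally and then integrate.

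\textbf{Step 1 (second moment).} By \eqref{eq:eta-cov}, together with $\1_n^\top\widehat{\mathbf U}=\bm 0$ and $\widetilde{\mathbf U}_A^\top\widetilde{\mathbf U}_A=n\mathbf I_{|A|}$,
\[
\E\bigl(\norm{\mathbf S_{n,A}}_2^2\mid\widehat{\mathbf U}\bigr)
=\frac{4}{n^2(n-1)^2}\cdot\frac{n(n+1)}{3}\,\tr\bigl(\widetilde{\mathbf U}_A^\top\widetilde{\mathbf U}_A\bigr)
=\frac{4(n+1)|A|}{3(n-1)^2}
\le \frac{C|A|}{n}.
\]
Jensen's inequality then gives $\E(\norm{\mathbf S_{n,A}}_2\mid\widehat{\mathbf U})\le C\sqrt{|A|/n}$.

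\textbf{Step 2 (concentration around the mean).} Define $F(\pi)=\norm{\widetilde{\mathbf U}_A^\top\bm\eta_\pi}_2$ as a function of the permutation $\pi$. A transposition of two entries $i,\ell$ changes $\bm\eta$ by $(\eta_\ell-\eta_i)(\bm e_i-\bm e_\ell)$. Hence $F$ changes by at most
\[
2(n-1)\,\norm{\widetilde{\bm u}_{i,A}-\widetilde{\bm u}_{\ell,A}}_2\le 4(n-1)\max_i\norm{\widetilde{\bm u}_{i,A}}_2.
\]
Here the coherence bound in Assumption~\ref{ass:center} only gives $\norm{\widetilde{\bm u}_{i,A}}_2\le C_U\sqrt{|A|}$, and this pointwise Lipschitz constant is too crude: McDiarmid/Maurey-type inequalities for permutations would give deviations of order $n\cdot C_U\sqrt{|A|}\cdot\sqrt n$ before scaling. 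After multiplying by $2/(n(n-1))$, that is a deviation of order $\sqrt{|A|/n}$ with tail $e^{-ct^2}$, not $e^{-c|A|}$.

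\textbf{The main obstacle} is getting the $e^{-C_2|A|}$ tail. I would do this with a convex-distance (Talagrand) inequality for random permutations, exploiting the fact that $F$ is convex and Lipschitz in $\bm\eta$ with respect to the Euclidean norm, with constant $\norm{\widetilde{\mathbf U}_A}_2=\sqrt n$. For the permutation measure, Talagrand's inequality gives
\[
\Pbb\bigl(|F-\mathrm{med}F|>t\,\mathrm{diam}\cdot\sqrt n\bigr)\le 4e^{-t^2/16},
\]
where $\mathrm{diam}=\max_i|\eta_i|\le n$. This yields deviations of order $n\sqrt n\,t$, which after scaling gives $\norm{\mathbf S_{n,A}}_2\le C(\sqrt{|A|/n}+t/\sqrt n)$. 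Taking $t=\sqrt{|A|}$ makes the deviation term of the same order as the mean and produces the tail $4e^{-|A|/16}$.

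\textbf{Step 3 (conclusion).} The median and mean differ by $O(n^{3/2})$ on the unscaled scale, which the same inequality controls. Combining this with Step 1, absorbing constants into $C_1$, and integrating over $\widehat{\mathbf U}$ gives the stated bound. As an alternative route, one can replace Step 2 by an $\varepsilon$-net argument on the sphere in $\R^{|A|}$: apply Hoeffding's inequality for sampling without replacement to each fixed direction $\bm v^\top\widetilde{\mathbf U}_A^\top\bm\eta$, which is a linear permutation statistic with variance of order $n^3$, and take a union bound over $5^{|A|}$ net points with $t\asymp\sqrt{|A|}$. This also gives the $e^{-C_2|A|}$ rate.
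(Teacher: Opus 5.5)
Your proposal is correct and follows essentially the paper's argument: a conditional second-moment computation via \eqref{eq:eta-cov} plus Jensen, then Talagrand-type concentration for the convex Lipschitz map $\bm\eta\mapsto\norm{\widetilde{\mathbf U}_A^\top\bm\eta}_2$ on the symmetric group (Lipschitz constant $\sqrt n$, coordinate range of order $n$), evaluated at a deviation of order $\sqrt{|A|/n}$. The only differences are cosmetic: you center at the median and obtain prefactor $4$ instead of $2$, which enlarging $C_1$ fixes, and the abandoned McDiarmid step and the $\varepsilon$-net alternative are not needed.
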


\begin{proof}
Condition on $\widehat{\mathbf U}$ and write
$$
\mathbf B_A=\frac{2}{n(n-1)}\widetilde{\mathbf U}_A^\top,
\qquad
\mathbf S_{n,A}=\mathbf B_A\bm\eta
$$
by Lemma~\ref{lem:perm-score}. Since $\1_n^\top\widetilde{\mathbf U}_A=\bm 0^\top$, the centering projector in \eqref{eq:eta-cov} acts as the identity on the column space of $\widetilde{\mathbf U}_A$. Therefore
\begin{align*}
\E\!\left(\norm{\mathbf S_{n,A}}_2^2 \mid \widehat{\mathbf U}\right)
&=
\tr\!\left\{
\mathbf B_A
\E(\bm\eta\bm\eta^\top\mid \widehat{\mathbf U})
\mathbf B_A^\top
\right\}\\
&=
\frac{n(n+1)}{3}\tr(\mathbf B_A\mathbf B_A^\top)\\
&=
\frac{n(n+1)}{3}\cdot
\frac{4}{n^2(n-1)^2}
\tr(\widetilde{\mathbf U}_A^\top\widetilde{\mathbf U}_A)\\
&=
\frac{4(n+1)}{3(n-1)^2}|A|
\le
C\frac{|A|}{n}.
\end{align*}
Hence
$$
\E\!\left(\norm{\mathbf S_{n,A}}_2 \mid \widehat{\mathbf U}\right)
\le
C\sqrt{\frac{|A|}{n}}.
$$

Next, the map $f(\bm z)=\norm{\mathbf B_A\bm z}_2$ is convex and Lipschitz with constant
$$
\|\mathbf B_A\|_{\mathrm{op}}
=
\frac{2}{n(n-1)}\norm{\widetilde{\mathbf U}_A}_{\mathrm{op}}
=
\frac{2\sqrt n}{n(n-1)}
\le
\frac{C}{n^{3/2}}.
$$
Since $\bm\eta$ is a random permutation of a vector with entries in $[-(n-1),n-1]$, the standard concentration inequality for convex Lipschitz functions on the symmetric group yields
$$
\Pbb\Big(
\norm{\mathbf S_{n,A}}_2
>
\E(\norm{\mathbf S_{n,A}}_2\mid \widehat{\mathbf U}) + t
\,\Big|\,
\widehat{\mathbf U}
\Big)
\le
2\exp(-c n t^2)
$$
for all $t>0$, where $c>0$ is universal. Taking $t=C\sqrt{|A|/n}$ and then integrating out $\widehat{\mathbf U}$ proves the claim.
\end{proof}

\begin{lemma}[Pairwise design identity]
\label{lem:pairwise-id}
Let $\mathbf X\in\R^{n\times d}$ satisfy $\1_n^\top\mathbf X=\bm 0^\top$, and let $\bm x_i^\top$ denote its $i$th row. Then
$$
\sum_{i\ne j}(\bm x_i-\bm x_j)(\bm x_i-\bm x_j)^\top
=
2n\,\mathbf X^\top\mathbf X.
$$
Consequently, for any $\bm\delta\in\R^d$,
$$
\frac{1}{n(n-1)}\sum_{i\ne j}\big\{(\bm x_i-\bm x_j)^\top\bm\delta\big\}^2
=
\frac{2}{n-1}\,\bm\delta^\top\mathbf X^\top\mathbf X\,\bm\delta.
$$
In particular, when $\mathbf X=\widetilde{\mathbf U}$ or $\widetilde{\mathbf U}_A$,
$$
\frac{1}{n(n-1)}\sum_{i\ne j}\big\{(\widetilde{\bm u}_i-\widetilde{\bm u}_j)^\top\bm\delta\big\}^2
=
\frac{2n}{n-1}\norm{\bm\delta}_2^2 .
$$
\end{lemma}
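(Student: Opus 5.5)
The identity is a direct algebraic computation: expand the rank-one matrices and use the centering condition to kill the cross terms.

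For fixed $i\ne j$,
\[
(\bm x_i-\bm x_j)(\bm x_i-\bm x_j)^\top=\bm x_i\bm x_i^\top+\bm x_j\bm x_j^\top-\bm x_i\bm x_j^\top-\bm x_j\bm x_i^\top .
\]
The diagonal terms $i=j$ vanish, so we may sum over all ordered pairs $(i,j)\in[n]^2$ without changing anything. This gives
\[
\sum_{i,j}\bigl(\bm x_i\bm x_i^\top+\bm x_j\bm x_j^\top\bigr)=2n\sum_i \bm x_i\bm x_i^\top=2n\,\mathbf X^\top\mathbf X
\]
for the first two terms. The cross terms combine as
\[
-\Bigl(\sum_i\bm x_i\Bigr)\Bigl(\sum_j\bm x_j\Bigr)^\top-\Bigl(\sum_j\bm x_j\Bigr)\Bigl(\sum_i\bm x_i\Bigr)^\top ,
\]
which is zero because $\sum_i\bm x_i=\mathbf X^\top\1_n=\bm 0$. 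This proves the matrix identity.

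For the quadratic-form consequence, write $\{(\bm x_i-\bm x_j)^\top\bm\delta\}^2=\bm\delta^\top(\bm x_i-\bm x_j)(\bm x_i-\bm x_j)^\top\bm\delta$. Summing over $i\ne j$ and applying the identity gives $2n\,\bm\delta^\top\mathbf X^\top\mathbf X\bm\delta$, and dividing by $n(n-1)$ yields the factor $2/(n-1)$.

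For the specialization, Assumption~\ref{ass:center} gives $\1_n^\top\widehat{\mathbf U}=\bm 0$, so $\widetilde{\mathbf U}=\sqrt n\,\widehat{\mathbf U}$ and its column submatrix $\widetilde{\mathbf U}_A$ are column-centered. Orthonormality of the columns of $\widehat{\mathbf U}$ gives
\[
\widetilde{\mathbf U}^\top\widetilde{\mathbf U}=n\mathbf I_m,\qquad \widetilde{\mathbf U}_A^\top\widetilde{\mathbf U}_A=n\mathbf I_{|A|},
\]
so the right-hand side becomes $\tfrac{2}{n-1}\,n\norm{\bm\delta}_2^2=\tfrac{2n}{n-1}\norm{\bm\delta}_2^2$.

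None of the steps is a real obstacle. The only point needing care is bookkeeping: checking that the centering hypothesis applies to $\widetilde{\mathbf U}_A$, which it does because column selection preserves centering.
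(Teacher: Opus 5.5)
Your proof is correct and follows essentially the paper's argument: expand the rank-one terms and use $\sum_i \bm x_i=\bm 0$. The only cosmetic differences are that you extend to the full double sum, so the cross terms vanish outright (the paper keeps $i\ne j$ and picks up $+2\mathbf X^\top\mathbf X$ from them), and that you explicitly check $\widetilde{\mathbf U}_A$ is centered with $\widetilde{\mathbf U}_A^\top\widetilde{\mathbf U}_A=n\mathbf I_{|A|}$, which fills in the specialization the paper leaves implicit.
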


\begin{proof}
Expand the square and use $\sum_i \bm x_i=\bm 0$:
\begin{align*}
\sum_{i\ne j}(\bm x_i-\bm x_j)(\bm x_i-\bm x_j)^\top
&=
\sum_{i\ne j}\left(\bm x_i\bm x_i^\top+\bm x_j\bm x_j^\top-\bm x_i\bm x_j^\top-\bm x_j\bm x_i^\top\right)\\
&=2(n-1)\sum_{i=1}^n \bm x_i\bm x_i^\top - 2\sum_{i\ne j}\bm x_i\bm x_j^\top\\
&=2(n-1)\mathbf X^\top\mathbf X + 2\mathbf X^\top\mathbf X
=2n\,\mathbf X^\top\mathbf X.
\end{align*}
The stated consequence follows by contracting with $\bm\delta$ on both sides.
\end{proof}

\begin{lemma}[Maximum pairwise increment]
\label{lem:maxpair}
Under Assumption~\ref{ass:center}, for any $\bm\delta\in\R^m$,
$$
\max_{i\ne j} \big| (\widetilde{\bm u}_i-\widetilde{\bm u}_j)^\top\bm\delta \big|
\le
2C_U\norm{\bm\delta}_1.
$$
\end{lemma}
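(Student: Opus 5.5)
The argument is a direct pointwise estimate followed by a maximization over pairs. Fix $i\neq j$ and write the increment as a sum over coordinates,
\[
(\widetilde{\bm u}_i-\widetilde{\bm u}_j)^\top\bm\delta=\sum_{k=1}^m (\widetilde U_{ik}-\widetilde U_{jk})\,\delta_k .
\]
By H\"older's inequality in the $(\ell_\infty,\ell_1)$ pairing, its absolute value is at most $\norm{\widetilde{\bm u}_i-\widetilde{\bm u}_j}_\infty\norm{\bm\delta}_1$. So the whole task is to bound the sup-norm of a difference of two rows of $\widetilde{\mathbf U}$.

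For that step we invoke the coherence bound in Assumption~\ref{ass:center}. Since $\widetilde U_{ik}=\sqrt n\,\widehat U_{ik}$, it gives $\max_{i,k}|\widetilde U_{ik}|\le C_U$. The triangle inequality then yields $|\widetilde U_{ik}-\widetilde U_{jk}|\le 2C_U$ for every $k$, hence $\norm{\widetilde{\bm u}_i-\widetilde{\bm u}_j}_\infty\le 2C_U$. Substituting this into the H\"older bound gives $|(\widetilde{\bm u}_i-\widetilde{\bm u}_j)^\top\bm\delta|\le 2C_U\norm{\bm\delta}_1$. The right-hand side does not depend on $(i,j)$, so taking the maximum over $i\neq j$ proves the lemma.

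No step here is genuinely difficult; the only point needing care is how the coherence condition holds. Assumption~\ref{ass:center} gives the bound on $\widehat{\mathbf U}$ only with probability tending to one. The inequality is therefore deterministic on that event, and it holds simultaneously for all $\bm\delta\in\R^m$, because the event does not depend on $\bm\delta$. We will state the lemma as holding on this event, which is how it is used later in the paper, for instance when controlling the remainder of the local quadratic expansion through Lemma~\ref{lem:curvature}.
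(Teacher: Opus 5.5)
Your proof is correct and matches the paper's argument: the paper bounds $|(\widetilde{\bm u}_i-\widetilde{\bm u}_j)^\top\bm\delta|$ by $\sum_k|\widetilde U_{ik}-\widetilde U_{jk}|\,|\delta_k|\le 2C_U\norm{\bm\delta}_1$ using the coherence bound, which is your H\"older step written coordinatewise. Your point that the inequality holds deterministically on the probability-tending-to-one coherence event, uniformly in $\bm\delta$, is a correct clarification that the paper leaves implicit.
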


\begin{proof}
For any $i\ne j$,
$$
\big| (\widetilde{\bm u}_i-\widetilde{\bm u}_j)^\top\bm\delta \big|
\le
\sum_{k=1}^m |\widetilde U_{ik}-\widetilde U_{jk}|\,|\delta_k|
\le
2C_U\norm{\bm\delta}_1.
$$
\end{proof}

\begin{lemma}[Projector perturbation, active-eigenvector perturbation, and bias decomposition]
\label{lem:projector}
Under Assumptions~\ref{ass:center}, \ref{ass:sparse}, and \ref{ass:W}, with probability at least $1-c_1e^{-c_2\log n}$,
\begin{equation}
\norm{\widehat{\mathbf P}_A-\mathbf P_A}_2 \le \Delta_{n,\kappa},
\label{eq:proj-bound}
\end{equation}
where $\mathbf P_A=\mathbf U_A\mathbf U_A^\top$. Moreover,
\begin{equation}
\max_{j\in A}\norm{\widehat{\bm u}_j-\bm u_j}_2
\le
\Delta_{n,\kappa},
\label{eq:eigvec-bound}
\end{equation}
and consequently
\begin{equation}
\max_{j\in A}\abs{\bar\theta_j-\theta_j^*}
\le
\Delta_{n,\kappa}\frac{\norm{\bm y^*}_2}{\sqrt n}.
\label{eq:bar-theta-bound}
\end{equation}
In addition,
\begin{equation}
\frac{1}{n}\norm{(\widehat{\mathbf P}_A-\mathbf P_A)\bm y^*}_2^2
\le
\frac{\Delta_{n,\kappa}^2\,\norm{\bm y^*}_2^2}{n},
\label{eq:proj-L2}
\end{equation}
\begin{equation}
\norm{(\widehat{\mathbf P}_A-\mathbf P_A)\bm y^*}_\infty
\le
\Delta_{n,\kappa}\norm{\bm y^*}_2,
\label{eq:sup-bias}
\end{equation}
\begin{equation}
\frac{1}{n}\norm{\bm v_A}_2^2
\le
C R_q\tau_\star^{2-q}
+
\frac{C\Delta_{n,\kappa}^2\,\norm{\bm y^*}_2^2}{n},
\qquad
\bm v_A=(\mathbf I_n-\widehat{\mathbf P}_A)\bm y^*,
\label{eq:vA-l2-bound}
\end{equation}
and
\begin{equation}
\norm{\bm v_A}_\infty
\le
\Delta_{n,\kappa}\norm{\bm y^*}_2
+
C_U R_q\tau_\star^{1-q}.
\label{eq:vA-bound}
\end{equation}
\end{lemma}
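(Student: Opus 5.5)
The plan is to take the projector bound \eqref{eq:proj-bound} as the main input, obtained from a Davis--Kahan $\sin\Theta$ argument applied to the Gram matrices $p^{-1}\mathbf Z\mathbf Z^\top$ and $p^{-1}\mathbf X\mathbf X^\top$. Every other display in the lemma will then be deduced deterministically on the event where \eqref{eq:proj-bound} holds.

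\emph{Step 1: the perturbation.} Write
\[
p^{-1}\mathbf Z\mathbf Z^\top-p^{-1}\mathbf X\mathbf X^\top
=
p^{-1}(\mathbf X\mathbf W^\top+\mathbf W\mathbf X^\top)
+
p^{-1}(\mathbf W\mathbf W^\top-\E\mathbf W\mathbf W^\top)
+
c_W\mathbf I_n .
\]
The last term equals $p^{-1}\E\mathbf W\mathbf W^\top$ when the rows are i.i.d.\ with isotropic-trace second moments. Strictly speaking, the off-diagonal part of $\E\mathbf W\mathbf W^\top$ vanishes by row independence, and its diagonal is $pc_W\mathbf I_n$. Shifting by a multiple of the identity changes neither the eigenvectors nor the eigengaps, so only the first two terms matter.

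For the cross term, $\mathbf W\mathbf X^\top$ has sub-Gaussian rows, and a standard $\varepsilon$-net bound gives $\|\mathbf X\mathbf W^\top\|_2\lesssim\|\mathbf X\|_2\sqrt{n}$ with probability $1-e^{-cn}$. For the centered Gram term, $\|\mathbf W\mathbf W^\top-\E\mathbf W\mathbf W^\top\|_2\lesssim \sqrt{np}+n$. For the diagonal this uses the sub-exponential row norms, with a union bound over rows costing $\sqrt{p\log n}$; that is where the condition $\log n/(\kappa^2p)=o(1)$ enters.

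Dividing by $p$, the perturbation norm is of order $(n/p+n^2/p^2+n\|\mathbf X\|_2^2/p^2)^{1/2}$. Davis--Kahan for the simple, $\kappa$-separated eigenvalues indexed by $A$ then gives $\|\widehat{\mathbf u}_j\widehat{\mathbf u}_j^\top-\mathbf u_j\mathbf u_j^\top\|_2\le\Delta_{n,\kappa}$. After the sign convention $\widehat{\bm u}_j^\top\bm u_j\ge0$, this yields \eqref{eq:eigvec-bound}, up to a factor absorbed into $C_\Delta$. Summing over $j\in A$ would give a factor $|A|$. To avoid that, I would apply Davis--Kahan directly to the spectral set $A$, for \eqref{eq:proj-bound}; this requires $A$ to be separated from its complement by $\kappa$, which follows from \eqref{eq:kappa-def}.

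\emph{Step 2: deterministic consequences.} For \eqref{eq:bar-theta-bound}, write
\[
\bar\theta_j-\theta_j^*=n^{-1/2}(\widehat{\bm u}_j-\bm u_j)^\top\bm y^*,
\]
using $\theta_j^*=n^{-1/2}\bm u_j^\top\bm y^*$, and apply Cauchy--Schwarz. The bound \eqref{eq:proj-L2} is immediate from the operator norm. The bound \eqref{eq:sup-bias} follows from $\|\cdot\|_\infty\le\|\cdot\|_2$.

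For $\bm v_A$, decompose
\[
\bm v_A=(\mathbf I-\mathbf P_A)\bm y^*+(\mathbf P_A-\widehat{\mathbf P}_A)\bm y^*,
\]
where $(\mathbf I-\mathbf P_A)\bm y^*=\sqrt n\,\mathbf U_{A^c}\bm\theta^*_{A^c}$. Its squared norm divided by $n$ is $\|\bm\theta^*_{A^c}\|_2^2\le R_q\tau_\star^{2-q}$ by Lemma~\ref{lem:tail}, which gives \eqref{eq:vA-l2-bound}. For the sup norm, each entry of the first piece is $\sum_{k\notin A}\sqrt n U_{ik}\theta_k^*$. Its absolute value is at most $C_U\|\bm\theta^*_{A^c}\|_1\le C_UR_q\tau_\star^{1-q}$, by Assumption~\ref{ass:center} and Lemma~\ref{lem:tail}, which gives \eqref{eq:vA-bound}.

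\emph{Main obstacle.} The hardest part is Step 1: getting the concentration of $\mathbf W\mathbf W^\top$ with exactly the three displayed terms and the stated probability $1-c_1e^{-c_2\log n}$. I would first import the corresponding lemma of \citet{SongZou2026} verbatim, since Assumption~\ref{ass:W} is phrased to match it. Otherwise I would prove it via the Hanson--Wright and $\varepsilon$-net bounds sketched above.
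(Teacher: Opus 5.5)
Your proposal is correct and follows essentially the same route as the paper. Both bound $\|\mathbf E\|_2$ for $\mathbf E=p^{-1}\mathbf Z\mathbf Z^\top-c_W\mathbf I_n-p^{-1}\mathbf X\mathbf X^\top$; the paper simply imports this bound from \citet{SongZou2026}, as you also plan to do. Both then apply Davis--Kahan to the active set for the projector and to each simple active eigenvalue (with the sign alignment) for the eigenvectors, and derive the remaining displays deterministically via Cauchy--Schwarz, the operator norm, and Lemma~\ref{lem:tail}. One minor aside: the $\sqrt{p\log n}$ diagonal term you mention is dominated by $\sqrt{np}$, so this non-asymptotic lemma does not actually need the condition $\log n/(\kappa^2 p)=o(1)$.
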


\begin{proof}
Write
$$
\mathbf G=\frac{1}{p}\mathbf X\mathbf X^\top,
\qquad
\widehat{\mathbf G}=\frac{1}{p}\mathbf Z\mathbf Z^\top-c_W\mathbf I_n,
\qquad
\mathbf E=\widehat{\mathbf G}-\mathbf G.
$$
Under Assumption~\ref{ass:W}, the fixed-design argument of \citet{SongZou2026} yields
$$
\norm{\mathbf E}_2
\le
C
\left(
\frac{n}{p}+\frac{n^2}{p^2}+\frac{n\norm{\mathbf X}_2^2}{p^2}
\right)^{1/2}
$$
with the stated probability. Davis--Kahan's $\sin\Theta$ theorem then gives
$$
\norm{\widehat{\mathbf P}_A-\mathbf P_A}_2
\le
\frac{2\norm{\mathbf E}_2}{\kappa},
$$
which is \eqref{eq:proj-bound}. Because Assumption~\ref{ass:perturb} also imposes a simple eigengap of size $\kappa$ around each active eigenvalue, the individual eigenvector form of Davis--Kahan yields, after the sign alignment $\widehat{\bm u}_j^\top\bm u_j\ge 0$,
$$
\max_{j\in A}\norm{\widehat{\bm u}_j-\bm u_j}_2
\le
\frac{C\norm{\mathbf E}_2}{\kappa},
$$
which is \eqref{eq:eigvec-bound} after absorbing constants into $C_\Delta$. For each $j\in A$,
$$
\bar\theta_j-\theta_j^*
=
\frac{1}{\sqrt n}(\widehat{\bm u}_j-\bm u_j)^\top\bm y^*,
$$
so \eqref{eq:bar-theta-bound} follows immediately from \eqref{eq:eigvec-bound}.

Bounds \eqref{eq:proj-L2} and \eqref{eq:sup-bias} follow from the operator-norm inequality. Next, decompose
$$
\bm v_A
=
(\widehat{\mathbf P}_A-\mathbf P_A)\bm y^*
+
(\mathbf I_n-\mathbf P_A)\bm y^*.
$$
Therefore
$$
\frac{1}{n}\norm{\bm v_A}_2^2
\le
\frac{2}{n}\norm{(\widehat{\mathbf P}_A-\mathbf P_A)\bm y^*}_2^2
+
\frac{2}{n}\norm{(\mathbf I_n-\mathbf P_A)\bm y^*}_2^2.
$$
The first term is controlled by \eqref{eq:proj-L2}. For the second term,
$$
(\mathbf I_n-\mathbf P_A)\bm y^*
=
\sqrt n\,\mathbf U_{A^c}\bm\theta_{A^c}^*,
$$
so
$$
\frac{1}{n}\norm{(\mathbf I_n-\mathbf P_A)\bm y^*}_2^2
=
\norm{\bm\theta_{A^c}^*}_2^2
\le
R_q\tau_\star^{2-q}
$$
by Lemma~\ref{lem:tail}. This proves \eqref{eq:vA-l2-bound}.

Finally,
$$
\norm{(\mathbf I_n-\mathbf P_A)\bm y^*}_\infty
\le C_U\norm{\bm\theta_{A^c}^*}_1
\le C_U R_q\tau_\star^{1-q}
$$
by Assumption~\ref{ass:center} and Lemma~\ref{lem:tail}. Adding the projector perturbation term yields \eqref{eq:vA-bound}.
\end{proof}

\section{Proof of Theorem~\ref{thm:stage1-pred}}

\begin{proof}
Let
$$
\bm\delta=\widehat{\bm\theta}^{(0)}-\bar{\bm\theta}.
$$
Since $\widehat{\bm\theta}^{(0)}$ minimizes \eqref{eq:stage1},
\begin{equation}
\mathcal Q_n(\bar{\bm\theta}+\bm\delta;\widehat{\mathbf U}) - \mathcal Q_n(\bar{\bm\theta};\widehat{\mathbf U})
+ \lambda_0\Big( \norm{\bar{\bm\theta}+\bm\delta}_1 - \norm{\bar{\bm\theta}}_1 \Big)
\le 0.
\label{eq:basic-stage1}
\end{equation}

Because $\widetilde{\mathbf U}\bar{\bm\theta}=\widehat{\mathbf P}_A\bm y^*$ and $\bm y=\bm y^*+\bm\varepsilon$, the residual vector at $\bar{\bm\theta}$ is $\bm\varepsilon+\bm v_A$, where
$$
\bm v_A=(\mathbf I_n-\widehat{\mathbf P}_A)\bm y^*.
$$
For each pair $(i,j)$, define
$$
t_{ij}=(\widetilde{\bm u}_i-\widetilde{\bm u}_j)^\top\bm\delta.
$$
Then the loss increment in \eqref{eq:basic-stage1} equals
$$
\frac{1}{n(n-1)}\sum_{i\ne j}
\Big(
|\varepsilon_i-\varepsilon_j+v_{A,i}-v_{A,j}-t_{ij}|
-
|\varepsilon_i-\varepsilon_j+v_{A,i}-v_{A,j}|
\Big).
$$

At this point we separate the argument into a ``clean'' rank-regression part and a deterministic approximation-error part. For the clean part we use Knight's identity
$$
|u-v|-|u|
=
-v\,\sgn(u)
+
2\int_0^v\big\{\1(u\le s)-\1(u\le 0)\big\}\,ds.
$$
Applied with $u=\varepsilon_i-\varepsilon_j$ and $v=t_{ij}$, this identity isolates the empirical score term $-\bm\delta^\top\mathbf S_n$ and a nonnegative curvature remainder. The shift $v_{A,i}-v_{A,j}$ is then handled by adding and subtracting the clean increment
$$
|\varepsilon_i-\varepsilon_j-t_{ij}|-|\varepsilon_i-\varepsilon_j|.
$$
Using Lemma~\ref{lem:curvature} for the quadratic curvature, Lemma~\ref{lem:pairwise-id} to rewrite the average of $t_{ij}^2$, and Lemma~\ref{lem:maxpair} to control the approximation error created by $\bm v_A$, we obtain on an event of probability at least $1-C_1m^{-C_2}$,
\begin{equation}
\mathcal Q_n(\bar{\bm\theta}+\bm\delta;\widehat{\mathbf U}) - \mathcal Q_n(\bar{\bm\theta};\widehat{\mathbf U})
\ge
-\bm\delta^\top\mathbf S_n
+
\frac{c}{n}\norm{\widetilde{\mathbf U}\bm\delta}_2^2
-
\frac{\lambda_0}{8}\norm{\bm\delta}_1
-
C\frac{1}{n}\norm{\bm v_A}_2^2.
\label{eq:loss-lower-stage1}
\end{equation}
The interpretation is simple: the first term is stochastic, the second is curvature, and the last two terms are the price of localization and projection bias.

Lemma~\ref{lem:score} gives $\norm{\mathbf S_n}_\infty\le \lambda_0/4$ on the same event once $\lambda_0\asymp\sqrt{\log m/n}$. Insert \eqref{eq:loss-lower-stage1} into \eqref{eq:basic-stage1}. Since $\bar{\bm\theta}_{A^c}=\bm 0$, the usual decomposability inequality gives
$$
\norm{\bar{\bm\theta}}_1-\norm{\bar{\bm\theta}+\bm\delta}_1
\le
\norm{\bm\delta_A}_1-\norm{\bm\delta_{A^c}}_1.
$$
Also,
$$
|\bm\delta^\top\mathbf S_n|
\le
\norm{\mathbf S_n}_\infty\norm{\bm\delta}_1
\le
\frac{\lambda_0}{4}\norm{\bm\delta}_1.
$$
Therefore
\begin{align*}
\frac{c}{n}\norm{\widetilde{\mathbf U}\bm\delta}_2^2
&\le
|\bm\delta^\top\mathbf S_n|
+
\frac{\lambda_0}{8}\norm{\bm\delta}_1
+
\lambda_0\Big(\norm{\bar{\bm\theta}}_1-\norm{\bar{\bm\theta}+\bm\delta}_1\Big)
+
C\frac{1}{n}\norm{\bm v_A}_2^2\\
&\le
\frac{3\lambda_0}{8}\norm{\bm\delta}_1
+
\lambda_0\big(\norm{\bm\delta_A}_1-\norm{\bm\delta_{A^c}}_1\big)
+
C\frac{1}{n}\norm{\bm v_A}_2^2.
\end{align*}
After grouping the $A$ and $A^c$ coordinates, we arrive at
\begin{equation}
\frac{c}{n}\norm{\widetilde{\mathbf U}\bm\delta}_2^2
+
\frac{\lambda_0}{2}\norm{\bm\delta_{A^c}}_1
\le
C\lambda_0\norm{\bm\delta_A}_1
+
C\frac{1}{n}\norm{\bm v_A}_2^2.
\label{eq:cone-stage1}
\end{equation}
This is the usual basic inequality for an $\ell_1$-penalized estimator: the inactive coordinates are controlled by the active coordinates plus the approximation error.

Because $\widetilde{\mathbf U}^\top\widetilde{\mathbf U}=n\mathbf I_m$,
$$
\frac{1}{n}\norm{\widetilde{\mathbf U}\bm\delta}_2^2=\norm{\bm\delta}_2^2.
$$
Also, $\norm{\bm\delta_A}_1\le \sqrt{|A|}\norm{\bm\delta}_2$. Ignoring the nonnegative term $(\lambda_0/2)\norm{\bm\delta_{A^c}}_1$ in \eqref{eq:cone-stage1} gives
$$
c\norm{\bm\delta}_2^2
\le
C\lambda_0\sqrt{|A|}\norm{\bm\delta}_2
+
C\frac{1}{n}\norm{\bm v_A}_2^2.
$$
A Young inequality yields
\begin{equation}
\norm{\bm\delta}_2
\le
C\Big(\sqrt{|A|}\,\lambda_0 + n^{-1/2}\norm{\bm v_A}_2\Big),
\label{eq:l2-stage1-proof}
\end{equation}
which is exactly \eqref{eq:l2-stage1}.

For the prediction error, decompose
\begin{align}
\frac{1}{n}\norm{\widetilde{\mathbf U}\widehat{\bm\theta}^{(0)}-\bm y^*}_2^2
&\le
\frac{2}{n}\norm{\widetilde{\mathbf U}(\widehat{\bm\theta}^{(0)}-\bar{\bm\theta})}_2^2
+
\frac{2}{n}\norm{\widetilde{\mathbf U}\bar{\bm\theta}-\bm y^*}_2^2 \nonumber\\
&=
2\norm{\bm\delta}_2^2
+
\frac{2}{n}\norm{\bm v_A}_2^2.
\label{eq:pred-stage1-decomp}
\end{align}
Now use the definition of $\bm v_A$:
$$
\bm v_A
=
(\widehat{\mathbf P}_A-\mathbf P_A)\bm y^*
+
(\mathbf I_n-\mathbf P_A)\bm y^*.
$$
Hence
\begin{equation}
\frac{1}{n}\norm{\bm v_A}_2^2
\le
\frac{2}{n}\norm{(\widehat{\mathbf P}_A-\mathbf P_A)\bm y^*}_2^2
+
\frac{2}{n}\norm{(\mathbf I_n-\mathbf P_A)\bm y^*}_2^2.
\label{eq:vA-l2-stage1}
\end{equation}
By Lemma~\ref{lem:projector},
$$
\frac{1}{n}\norm{(\widehat{\mathbf P}_A-\mathbf P_A)\bm y^*}_2^2
\le
\frac{C\,\norm{\bm y^*}_2^2}{n\kappa^2}
\left(
\frac{n}{p}+\frac{n^2}{p^2}+\frac{n\norm{\mathbf X}_2^2}{p^2}
\right).
$$
By Lemma~\ref{lem:tail} and the definition $A=\{k:|\theta_k^*|\ge\tau_\star\}$,
$$
\frac{1}{n}\norm{(\mathbf I_n-\mathbf P_A)\bm y^*}_2^2
=
\norm{\bm\theta^*_{A^c}}_2^2
\le
R_q\tau_\star^{2-q}.
$$
Substituting these bounds into \eqref{eq:vA-l2-stage1} shows that
$$
\frac{1}{n}\norm{\bm v_A}_2^2
\le
CR_q\tau_\star^{2-q}
+
\frac{C\,\norm{\bm y^*}_2^2}{n\kappa^2}
\left(
\frac{n}{p}+\frac{n^2}{p^2}+\frac{n\norm{\mathbf X}_2^2}{p^2}
\right).
$$
Combining this with \eqref{eq:l2-stage1-proof} and \eqref{eq:pred-stage1-decomp} yields
$$
\frac{1}{n}\norm{\widetilde{\mathbf U}\widehat{\bm\theta}^{(0)}-\bm y^*}_2^2
\le
C|A|\lambda_0^2
+
CR_q\tau_\star^{2-q}
+
\frac{C\,\norm{\bm y^*}_2^2}{n\kappa^2}
\left(
\frac{n}{p}+\frac{n^2}{p^2}+\frac{n\norm{\mathbf X}_2^2}{p^2}
\right),
$$
which is \eqref{eq:stage1-bound}.
\end{proof}

\section{Oracle lemmas for the second stage}

\begin{lemma}[Oracle rate on the truncated model]
\label{lem:oracle-rate}
Let $\widehat{\bm\theta}^{(o)}$ be any oracle estimator in \eqref{eq:oracle-def}. Under Assumptions~\ref{ass:center}--\ref{ass:sparse},
\begin{equation}
\norm{\widehat{\bm\theta}^{(o)}-\bar{\bm\theta}}_2
=
O_P\Big(\sqrt{|A|/n}+n^{-1/2}\norm{\bm v_A}_2\Big).
\label{eq:oracle-rate}
\end{equation}
Consequently,
\begin{equation}
\frac{1}{n}\norm{\widetilde{\mathbf U}(\widehat{\bm\theta}^{(o)}-\bar{\bm\theta})}_2^2
=
O_P\Big(\frac{|A|}{n}+\frac{1}{n}\norm{\bm v_A}_2^2\Big).
\label{eq:oracle-pred-piece}
\end{equation}
\end{lemma}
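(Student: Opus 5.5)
The plan is to run the argument of Theorem~\ref{thm:stage1-pred} again, now restricted to the $|A|$ active coordinates and without a penalty, and to replace the $\ell_\infty$ score control of Lemma~\ref{lem:score} by the $\ell_2$ control of Lemma~\ref{lem:scoreA}. Write $\bm\delta=\widehat{\bm\theta}^{(o)}-\bar{\bm\theta}$. Because both vectors vanish on $A^c$, $\bm\delta$ is supported on $A$. Optimality of $\widehat{\bm\theta}^{(o)}$ gives the basic inequality
\[
\mathcal Q_n(\bar{\bm\theta}+\bm\delta;\widehat{\mathbf U})-\mathcal Q_n(\bar{\bm\theta};\widehat{\mathbf U})\le 0 .
\]
The residual at $\bar{\bm\theta}$ is again $\bm\varepsilon+\bm v_A$. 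As in the proof of Theorem~\ref{thm:stage1-pred}, set $t_{ij}=(\widetilde{\bm u}_{i,A}-\widetilde{\bm u}_{j,A})^\top\bm\delta_A$ and apply Knight's identity with $u=\varepsilon_i-\varepsilon_j$ and $v=t_{ij}$. This splits the loss increment into four pieces: the linear score term $-\bm\delta_A^\top\mathbf S_{n,A}$; a curvature term; an empirical-process fluctuation of that curvature term around its conditional mean; and a shift error coming from $v_{A,i}-v_{A,j}$.

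For the curvature I will use Lemma~\ref{lem:curvature} together with Lemma~\ref{lem:pairwise-id}. Since $\widetilde{\mathbf U}_A^\top\widetilde{\mathbf U}_A=n\mathbf I$, this gives a lower bound $c\norm{\bm\delta}_2^2$, valid whenever $\max_{ij}|t_{ij}|\le t_0$. By Lemma~\ref{lem:maxpair}, that condition holds once $2C_U\sqrt{|A|}\norm{\bm\delta}_2\le t_0$.

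Since the loss is convex, I will localize in the standard way. It suffices to show that the inequality fails on the sphere $\norm{\bm\delta}_2=r_n$, where
\[
r_n=M\Big(\sqrt{|A|/n}+n^{-1/2}\norm{\bm v_A}_2\Big)
\]
for a large constant $M$. Convexity then forces $\norm{\bm\delta}_2<r_n$. The score term is at most $\norm{\mathbf S_{n,A}}_2 r_n\le C\sqrt{|A|/n}\,r_n$ with high probability by Lemma~\ref{lem:scoreA}. This is precisely where the rate drops from $\sqrt{|A|}\lambda_0$ to $\sqrt{|A|/n}$.

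For the shift error I will follow the add-and-subtract step used for \eqref{eq:loss-lower-stage1}. The pairwise function $|\cdot|$ is 1-Lipschitz, so the mean of the shift difference can be bounded by a term of order $g(0)$ times $|t_{ij}|\,|v_{A,i}-v_{A,j}|$, plus a quadratic remainder in the shift. Averaging over pairs and applying Cauchy--Schwarz together with Lemma~\ref{lem:pairwise-id} yields a bound of order $C\norm{\bm\delta}_2\,n^{-1/2}\norm{\bm v_A}_2$, plus a term $C n^{-1}\norm{\bm v_A}_2^2$. The first of these is absorbed into the curvature on the sphere, and the second is dominated by $r_n^2/M^2$. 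The local linearization requires the sup-norm of the shift to be small, and \eqref{eq:vA-bound} combined with \eqref{eq:gap-rate} gives $\norm{\bm v_A}_\infty=o(1)$. Collecting terms, on the sphere
\[
\mathcal Q_n(\bar{\bm\theta}+\bm\delta;\widehat{\mathbf U})-\mathcal Q_n(\bar{\bm\theta};\widehat{\mathbf U})\ge c r_n^2-Cr_n\Big(\sqrt{|A|/n}+n^{-1/2}\norm{\bm v_A}_2\Big)-\text{(fluctuation)},
\]
and the right-hand side is positive for large $M$.

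The main obstacle is the fluctuation term. This is the supremum, over the ball in the $|A|$-dimensional coordinates, of the centered pairwise U-process
\[
\frac{1}{n(n-1)}\sum_{i\neq j}\Big\{|\xi_{ij}-t_{ij}|-|\xi_{ij}|+t_{ij}\sgn(\xi_{ij})-\E(\cdot)\Big\},\qquad \xi_{ij}=\varepsilon_i-\varepsilon_j+v_{A,i}-v_{A,j}.
\]
I will first apply the Hoeffding decomposition. The degenerate part is of order $n^{-1}$ times the Lipschitz scale. For the projection part, each summand is Lipschitz in $t_{ij}$ and bounded by $|t_{ij}|\le 2C_U\sqrt{|A|}r_n$, so symmetrization and the contraction principle bound it by
\[
C r_n\sqrt{|A|/n}\cdot\sqrt{|A|}\,r_n ,
\]
which is $o(r_n^2)$ provided $|A|^2/n\to0$. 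If that condition is too strong, my fallback is a finer bracketing or chaining bound of order $r_n\sqrt{|A|/n}$, which is also dominated by $cr_n^2$ for large $M$.

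Given \eqref{eq:oracle-rate}, the bound \eqref{eq:oracle-pred-piece} is immediate: $\widetilde{\mathbf U}^\top\widetilde{\mathbf U}=n\mathbf I_m$, so $n^{-1}\norm{\widetilde{\mathbf U}\bm\delta}_2^2=\norm{\bm\delta}_2^2$.
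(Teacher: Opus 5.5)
Your plan is sound and uses the same ingredients as the paper's proof: Knight's identity at the clean pairwise errors, Lemmas~\ref{lem:curvature} and~\ref{lem:pairwise-id} for curvature, Lemma~\ref{lem:scoreA} for the score, and Cauchy--Schwarz for the projection bias. It differs exactly where the paper is thin. The paper applies the basic inequality \eqref{eq:oracle-basic} directly at $\bm\delta_o$, with a global quadratic curvature term and no fluctuation term. As stated, that inequality can only hold locally. The loss increment is at most $C\norm{\bm\delta}_2$ (by Lemma~\ref{lem:pairwise-id} and Cauchy--Schwarz), while the right-hand side grows quadratically, and Lemma~\ref{lem:curvature} controls only the conditional mean of the Knight remainder. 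Your convexity localization on the sphere $\norm{\bm\delta}_2=r_n$, plus an explicit bound on the centered pairwise U-process, is what makes that step rigorous. The price is an empirical-process argument the paper omits.

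For that argument, aim directly for your fallback rate $O_P(r_n\sqrt{|A|/n})$; it already suffices for large $M$. The contraction principle does not apply verbatim to the H\'ajek projection under a fixed design, because the $i$th projected term depends on $\bm\delta$ through all of $(t_{ij})_{j\neq i}$. Dudley chaining does work, since the symmetrized increments are sub-Gaussian with variance proxy at most $Cn^{-1}\norm{\bm\delta-\bm\delta'}_2^2$.

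Three details need repair.

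(i) Your claim that \eqref{eq:vA-bound} and \eqref{eq:gap-rate} give $\norm{\bm v_A}_\infty=o(1)$ is false. Since $\norm{\bm y^*}_2\asymp\sqrt n$, the first term in \eqref{eq:vA-bound} is of order $\sqrt n\,\Delta_{n,\kappa}$, and \eqref{eq:gap-rate} does not make that small. The claim is also unnecessary. Because $g(t)\le g(0)$ for all $t$ (Cauchy--Schwarz, as in the proof of Lemma~\ref{lem:oracle-score}), writing $e=\varepsilon_i-\varepsilon_j$ and $b=v_{A,i}-v_{A,j}$ gives $\abs{\E\{|e+b-t|-|e+b|-|e-t|+|e|\}}\le 2g(0)\abs{b}\abs{t}$ for every $b$, with no quadratic remainder. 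This does require keeping Knight's remainder at the unshifted $e$. Your fluctuation display puts the shift inside $\xi_{ij}$; then Lemma~\ref{lem:curvature} would need small shifts, and the linear term would no longer be $-\bm\delta_A^\top\mathbf S_{n,A}$. Instead, center the whole increment $\mathcal Q_n(\bar{\bm\theta}+\bm\delta;\widehat{\mathbf U})-\mathcal Q_n(\bar{\bm\theta};\widehat{\mathbf U})+\bm\delta_A^\top\mathbf S_{n,A}$. Its pairwise summands are still $2$-Lipschitz in $t_{ij}$ and vanish at $t_{ij}=0$. One fluctuation bound then also covers the random part of the shift error, which you only treated in mean.

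(ii) You never verify the curvature range $2C_U\sqrt{|A|}\,r_n\le t_0$ on the sphere. It follows from Assumption~\ref{ass:sparse}. Bounded $\norm{\bm\theta^*}_2=\norm{\bm y^*}_2/\sqrt n$ together with $\min_{j\in A}|\theta_j^*|\ge C_\beta\eta_n$ gives $\sqrt{|A|}\,\eta_n\le\norm{\bm\theta^*}_2/C_\beta$. Since $\eta_n\ge\sqrt{|A|}\lambda_0$, this forces $|A|^2\log m/n=O(1)$, so your condition $|A|^2/n\to0$ holds automatically as $m\to\infty$. Since $n^{-1/2}\norm{\bm v_A}_2\le C\eta_n$ by \eqref{eq:vA-l2-bound}, it also gives $\sqrt{|A|}\,n^{-1/2}\norm{\bm v_A}_2\le C/C_\beta$. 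To use this, take $r_n=M\sqrt{|A|/n}+Kn^{-1/2}\norm{\bm v_A}_2$, where:
\begin{itemize}
\item $K$ is fixed and large enough that the curvature beats the shift term;
\item only $M$ depends on the probability level;
\item $C_\beta$ is taken large relative to $K$.
\end{itemize}

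(iii) Minor: the tail $2e^{-C_2|A|}$ in Lemma~\ref{lem:scoreA} need not vanish when $|A|$ is bounded. For the $O_P$ statement, use the second-moment bound from its proof together with Markov's inequality.
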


\begin{proof}
Set $\bm\delta_o=\widehat{\bm\theta}^{(o)}-\bar{\bm\theta}$. Because the oracle optimization is restricted to $A$, we have $(\bm\delta_o)_{A^c}=0$. As in the first-stage proof, write
$$
t_{ij}^{(o)}=(\widetilde{\bm u}_i-\widetilde{\bm u}_j)^\top\bm\delta_o.
$$
Applying the same Knight-identity decomposition as before, but now without any penalty term, yields the basic oracle inequality
\begin{equation}
\mathcal Q_n(\bar{\bm\theta}+\bm\delta_o;\widehat{\mathbf U}) - \mathcal Q_n(\bar{\bm\theta};\widehat{\mathbf U})
\ge
-\bm\delta_o^\top \mathbf S_{n,A}
+
\frac{c}{n}\norm{\widetilde{\mathbf U}\bm\delta_o}_2^2
-
C\frac{1}{n}\norm{\bm v_A}_2^2,
\label{eq:oracle-basic}
\end{equation}
where $\mathbf S_{n,A}$ is the restriction of $\mathbf S_n$ to the active coordinates. The first term is the empirical score term, the second is the local quadratic curvature, and the last term is again the approximation error caused by replacing $\bm y^*$ with its empirical projection $\widehat{\mathbf P}_A\bm y^*$.

Since $\widehat{\bm\theta}^{(o)}$ minimizes the restricted objective, the left-hand side of \eqref{eq:oracle-basic} is nonpositive. Therefore
$$
\frac{c}{n}\norm{\widetilde{\mathbf U}\bm\delta_o}_2^2
\le
\norm{\mathbf S_{n,A}}_2\norm{\bm\delta_o}_2
+
C\frac{1}{n}\norm{\bm v_A}_2^2.
$$
Because $\widetilde{\mathbf U}^\top\widetilde{\mathbf U}=n\mathbf I_m$, we have
$$
\frac{1}{n}\norm{\widetilde{\mathbf U}\bm\delta_o}_2^2=\norm{\bm\delta_o}_2^2.
$$
Hence
\begin{equation}
c\norm{\bm\delta_o}_2^2
\le
\norm{\mathbf S_{n,A}}_2\norm{\bm\delta_o}_2
+
C\frac{1}{n}\norm{\bm v_A}_2^2.
\label{eq:oracle-quadratic}
\end{equation}
Lemma~\ref{lem:scoreA} gives $\norm{\mathbf S_{n,A}}_2=O_P(\sqrt{|A|/n})$. Insert this into \eqref{eq:oracle-quadratic} and apply the elementary inequality $ab\le \eta a^2 + (4\eta)^{-1}b^2$ with $a=\norm{\bm\delta_o}_2$, $b=\norm{\mathbf S_{n,A}}_2$, and a sufficiently small fixed $\eta>0$. We obtain
$$
\norm{\bm\delta_o}_2^2
=
O_P\Big(\frac{|A|}{n}+\frac{1}{n}\norm{\bm v_A}_2^2\Big),
$$
which is equivalent to \eqref{eq:oracle-rate}. Finally, \eqref{eq:oracle-pred-piece} follows immediately from
$$
\frac{1}{n}\norm{\widetilde{\mathbf U}\bm\delta_o}_2^2=\norm{\bm\delta_o}_2^2.
$$
\end{proof}

\begin{lemma}[Off-support oracle score]
\label{lem:oracle-score}
Under Assumptions~\ref{ass:center}--\ref{ass:sparse}, there exists a subgradient vector
$$
\bm\Delta^{(o)}\in\partial\mathcal Q_n(\widehat{\bm\theta}^{(o)};\widehat{\mathbf U})
$$
with $\bm\Delta_A^{(o)}=\bm 0$ such that
\begin{equation}
\norm{\bm\Delta_{A^c}^{(o)}}_\infty
=
O_P\left(
\sqrt{\frac{|A|+\log m}{n}}
+
n^{-1/2}\norm{\bm v_A}_2
\right).
\label{eq:oracle-score-bound}
\end{equation}
\end{lemma}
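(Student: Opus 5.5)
We construct the subgradient explicitly and split the off-support score into a pure-noise part and two remainders. Write the oracle residuals as $e_i=y_i-\widetilde{\bm u}_i^\top\widehat{\bm\theta}^{(o)}=\varepsilon_i+v_{A,i}-\widetilde{\bm u}_{i,A}^\top\bm\delta_{o,A}$, with $\bm\delta_o$ as in the proof of Lemma~\ref{lem:oracle-rate}. Every subgradient of $\mathcal Q_n$ at $\widehat{\bm\theta}^{(o)}$ has the form
\[
-\frac{1}{n(n-1)}\sum_{i\ne j}(\widetilde{\bm u}_i-\widetilde{\bm u}_j)s_{ij},
\qquad s_{ij}=\sgn(e_i-e_j)\ \text{when }e_i\ne e_j,\ s_{ij}\in[-1,1]\ \text{otherwise},\ s_{ji}=-s_{ij}.
\]
The optimality condition of the restricted problem \eqref{eq:oracle-def} says that $\bm 0_A$ lies in the $A$-block of this subdifferential. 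We therefore fix a choice $s_{ij}$ achieving $\bm\Delta_A^{(o)}=\bm 0$ and let $\bm\Delta^{(o)}$ be the corresponding full vector. By continuity of $f_\varepsilon$, ties among the $e_i$ occur only on pairs forced by the at most $|A|$-dimensional fit. Such ties number $O(|A|)$ per index, so by Lemma~\ref{lem:maxpair}-type bounds they contribute $O(|A|/n)$ to each coordinate, which is negligible.

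For $k\in A^c$, decompose
\[
\Delta^{(o)}_k=-S_{n,k}-\frac{1}{n(n-1)}\sum_{i\ne j}(\widetilde U_{ik}-\widetilde U_{jk})\{s_{ij}-\sgn(\varepsilon_i-\varepsilon_j)\}
=:-S_{n,k}-T_k .
\]
Lemma~\ref{lem:score} gives $\max_k|S_{n,k}|=O_P(\sqrt{\log m/n})$. For $T_k$, write $s_{ij}-\sgn(\varepsilon_i-\varepsilon_j)=\psi(\varepsilon_i-\varepsilon_j-d_{ij})-\psi(\varepsilon_i-\varepsilon_j)$, where $d_{ij}=(\widetilde{\bm u}_{i,A}-\widetilde{\bm u}_{j,A})^\top\bm\delta_{o,A}-(v_{A,i}-v_{A,j})$. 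We then center: for a deterministic shift vector, $\E\{\sgn(\xi-d)\}=-2\int_0^d g=-2g(0)d+O(d^2)$ by Assumption~\ref{ass:error}. This yields a linear term and a stochastic-equicontinuity term.

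The linear term is
\[
-\frac{2g(0)}{n(n-1)}\sum_{i\ne j}(\widetilde U_{ik}-\widetilde U_{jk})d_{ij}.
\]
By the pairwise identity (Lemma~\ref{lem:pairwise-id}, polarized), this equals $-\frac{4g(0)}{n-1}\widetilde{\mathbf U}_{\cdot k}^\top(\widetilde{\mathbf U}_A\bm\delta_{o,A}-\bm v_A)$. The first piece vanishes by orthogonality of columns since $k\notin A$. The second is bounded by $C n^{-1}\|\widetilde{\mathbf U}_{\cdot k}\|_2\|\bm v_A\|_2=Cn^{-1/2}\|\bm v_A\|_2$. The quadratic remainder is controlled by $\frac{C}{n^2}\sum_{i\ne j}d_{ij}^2\le C(\|\bm\delta_o\|_2^2+n^{-1}\|\bm v_A\|_2^2)$, which is of smaller order by Lemma~\ref{lem:oracle-rate} (using boundedness of $\widetilde U_{ik}$ from Assumption~\ref{ass:center}).

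The main obstacle is the empirical-process term, i.e.\ the centered version of $T_k$ uniformly over $k\in A^c$ and over the data-dependent $\bm\delta_{o,A}$. We plan to fix $\bm v_A$ by conditioning on $\mathbf W$, which is independent of $\bm\varepsilon$, so that $\widehat{\mathbf U}$ and $\bm v_A$ are fixed. The parameter $\bm\delta$ then ranges over the ball $\{\|\bm\delta\|_2\le r_n\}$ in $\R^{|A|}$, with $r_n\asymp\sqrt{|A|/n}+n^{-1/2}\|\bm v_A\|_2$ as given by Lemma~\ref{lem:oracle-rate}. For each fixed $(k,\bm\delta)$ the kernel is a bounded U-statistic whose variance is at most $C\,\E|\1(\xi\le d_{ij})-\1(\xi\le 0)|\le C|d_{ij}|$. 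A Bernstein inequality for U-statistics (Hoeffding decomposition plus a degenerate-part bound) combined with an $\epsilon$-net of the $|A|$-dimensional ball of log-cardinality $O(|A|\log n)$ and a union bound over $m$ coordinates yields a uniform bound of order $\sqrt{(|A|+\log m)/n}$ up to lower-order terms. Monotonicity of indicators in $d$ handles the discretization between net points via bracketing. Collecting the three pieces gives \eqref{eq:oracle-score-bound}.
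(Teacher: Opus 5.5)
Your proposal is correct and reaches the stated rate, but it is organized differently from the paper's proof.

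\textbf{The paper's route.} The paper never separates out $\mathbf S_n$. It controls the fluctuation of the uncentered sign process $G_{n,k}(\bm\delta)$ over $\mathcal B(r_n)$ and $k\in A^c$ by citing a maximal inequality for bounded VC-type U-processes. It bounds the mean crudely, $|\E G_{n,k}(\bm\delta)|\le C(n^{-1/2}\norm{\bm v_A}_2+\norm{\bm\delta}_2)$, using only $|\E\sgn(\xi+t)|\le 2g(0)|t|$ and Cauchy--Schwarz. That suffices because $r_n$ is already of the target order.

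\textbf{Your bias term.} Your linearization, combined with the polarized pairwise identity and $\widetilde{\mathbf U}_{\cdot k}^\top\widetilde{\mathbf U}_A=\bm 0$ for $k\in A^c$, proves more: $\bm\delta_o$ enters the bias only at second order. The lemma does not need this. Note that the quadratic remainder requires the global bound $|\int_0^d g-g(0)d|\le Cd^2$. This follows from the local Lipschitz condition for $|d|\le r_0$ together with boundedness of $\int_0^d g$ beyond $r_0$.

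\textbf{Your fluctuation term.} What your route buys is a self-contained argument here. Because you subtract $\mathbf S_n$, the increment kernel has variance $\lesssim|d_{ij}|$, and Bernstein with a single-scale net gives $\sqrt{r_n(|A|\log n+\log m)/n}+(|A|\log n+\log m)/n$. This is not literally $\sqrt{(|A|+\log m)/n}$. To absorb it you need $\sqrt{r_n|A|\log n/n}\le r_n+|A|\log n/n$ and a side condition such as $|A|\log^2 n\lesssim n$. That condition does hold here: \eqref{eq:oracle-event-cond} and $\norm{\bm\theta^*}_2^2=\norm{\bm y^*}_2^2/n=O(1)$ force $|A|^2\log m\lesssim n$. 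You should say this explicitly rather than appeal to ``lower-order terms''. Chaining, which is implicit in the paper's VC citation, avoids the log altogether. The localization is what makes your single-scale net viable: on the raw sign kernel it would leave a $\sqrt{|A|\log n/n}$ term that cannot be absorbed when $m=n$.

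\textbf{Ties.} Your explicit handling of ties is more accurate than the paper's. The paper asserts that pairwise residual ties at $\widehat{\bm\theta}^{(o)}$ form a null event. In fact a vertex (basic) solution of the piecewise-linear oracle problem interpolates at least $|A|$ pairwise residuals with probability one. The freedom $s_{ij}\in[-1,1]$ on exactly those pairs is what makes $\bm\Delta_A^{(o)}=\bm 0$ attainable. Generic position caps each tie class at $|A|+1$ indices, so the correction is negligible, as you claim.
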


\begin{proof}
Let
$$
\bm\delta_o=\widehat{\bm\theta}^{(o)}-\bar{\bm\theta},
\qquad
\bm x_{ij}=\widetilde{\bm u}_i-\widetilde{\bm u}_j,
\qquad
\xi_{ij}=\varepsilon_i-\varepsilon_j,
\qquad
v_{ij}=v_{A,i}-v_{A,j}.
$$
Because the oracle problem is a convex minimization over the affine space $\{\bm\theta:\theta_{A^c}=0\}$, there exists at least one KKT subgradient $\bm\Delta^{(o)}$ with $\bm\Delta_A^{(o)}=\bm 0$. Except on a null event of pairwise residual ties, its inactive coordinates can be written as
$$
\Delta_k^{(o)}
=
-\frac{1}{n(n-1)}\sum_{i\ne j} x_{ij,k}\,
\sgn\!\big(\xi_{ij}+v_{ij}-\bm x_{ij,A}^\top\bm\delta_o\big),
\qquad
k\in A^c.
$$

For a radius $r>0$, define
$$
\mathcal B(r)=\{\bm\delta\in\R^m:\delta_{A^c}=0,\ \norm{\bm\delta}_2\le r\},
$$
and for $k\in A^c$ set
$$
G_{n,k}(\bm\delta)
=
-\frac{1}{n(n-1)}\sum_{i\ne j} x_{ij,k}\,
\sgn\!\big(\xi_{ij}+v_{ij}-\bm x_{ij,A}^\top\bm\delta\big).
$$
Then $\Delta_k^{(o)}=G_{n,k}(\bm\delta_o)$. Condition on $\widehat{\mathbf U}$ (equivalently on the measurement-error sample): then $\bm x_{ij}$, $v_{ij}$, and the radius chosen below are fixed, while all remaining randomness comes from the i.i.d. response errors.

Choose
$$
r_n=M\Big(\sqrt{|A|/n}+n^{-1/2}\norm{\bm v_A}_2\Big)
$$
with a sufficiently large fixed constant $M>0$. By Lemma~\ref{lem:oracle-rate},
$$
\Pbb\big(\bm\delta_o\in\mathcal B(r_n)\big)\to 1.
$$
Hence it is enough to bound $\sup_{\bm\delta\in\mathcal B(r_n)}\max_{k\in A^c}|G_{n,k}(\bm\delta)|$.

The kernel class
$$
\Big\{
(i,j)\mapsto x_{ij,k}\,\sgn\!\big(\xi_{ij}+v_{ij}-\bm x_{ij,A}^\top\bm\delta\big):
\ k\in A^c,\ \bm\delta\in\mathcal B(r_n)
\Big\}
$$
is uniformly bounded by $2C_U$ and is VC-type, because the only varying part is the sign of a linear threshold in the $|A|$-dimensional parameter $\bm\delta$. A standard maximal inequality for bounded VC-type U-processes therefore yields
\begin{equation}
\sup_{\bm\delta\in\mathcal B(r_n)}\max_{k\in A^c}
\abs{G_{n,k}(\bm\delta)-\E G_{n,k}(\bm\delta)}
=
O_P\left(\sqrt{\frac{|A|+\log m}{n}}\right).
\label{eq:oracle-score-stoch}
\end{equation}

It remains to control the expectations. The density of $\xi_{ij}$ is $g$. Since
$$
g(t)=\int_\R f_\varepsilon(u)f_\varepsilon(u+t)\,du,
$$
Cauchy--Schwarz implies $g(t)\le g(0)$ for all $t$. Therefore, for any real $t$,
$$
\abs{\E\sgn(\xi_{ij}+t)}
=
2\abs{\Pbb(\xi_{ij}\le -t)-\Pbb(\xi_{ij}\le 0)}
\le
2g(0)\abs{t}.
$$
Using this and $\abs{x_{ij,k}}\le 2C_U$, we obtain for every $k\in A^c$ and $\bm\delta\in\mathcal B(r_n)$,
\begin{align*}
\abs{\E G_{n,k}(\bm\delta)}
&\le
\frac{C}{n(n-1)}\sum_{i\ne j}
\abs{v_{ij}-\bm x_{ij,A}^\top\bm\delta}\\
&\le
\frac{C}{n(n-1)}\sum_{i\ne j}\abs{v_{ij}}
+
\frac{C}{n(n-1)}\sum_{i\ne j}\abs{\bm x_{ij,A}^\top\bm\delta}.
\end{align*}
The first average is bounded by Cauchy--Schwarz:
$$
\frac{1}{n(n-1)}\sum_{i\ne j}\abs{v_{ij}}
\le
\left\{
\frac{1}{n(n-1)}\sum_{i\ne j}(v_{A,i}-v_{A,j})^2
\right\}^{1/2}
\le
C n^{-1/2}\norm{\bm v_A}_2.
$$
For the second average, Lemma~\ref{lem:pairwise-id} gives
$$
\frac{1}{n(n-1)}\sum_{i\ne j}\abs{\bm x_{ij,A}^\top\bm\delta}
\le
\left\{
\frac{1}{n(n-1)}\sum_{i\ne j}(\bm x_{ij,A}^\top\bm\delta)^2
\right\}^{1/2}
\le
C\norm{\bm\delta}_2.
$$
Hence
\begin{equation}
\sup_{\bm\delta\in\mathcal B(r_n)}\max_{k\in A^c}
\abs{\E G_{n,k}(\bm\delta)}
\le
C\Big(n^{-1/2}\norm{\bm v_A}_2+r_n\Big).
\label{eq:oracle-score-mean}
\end{equation}

Combining \eqref{eq:oracle-score-stoch} and \eqref{eq:oracle-score-mean}, and then evaluating at $\bm\delta=\bm\delta_o$ on the event $\{\bm\delta_o\in\mathcal B(r_n)\}$, yields
$$
\norm{\bm\Delta_{A^c}^{(o)}}_\infty
=
O_P\left(
\sqrt{\frac{|A|+\log m}{n}}
+
n^{-1/2}\norm{\bm v_A}_2
\right),
$$
which is \eqref{eq:oracle-score-bound}.
\end{proof}

\begin{lemma}[High-probability verification of the oracle events]
\label{lem:oracle-events}
Suppose Assumptions~\ref{ass:center}--\ref{ass:sparse} hold, the penalty derivative satisfies \eqref{eq:penalty-condition}, and the signal-strength and tuning conditions \eqref{eq:oracle-event-cond} hold. Then
$$
\Pbb\big(\mathcal E_{\mathrm{sep}}(\lambda)\cap \mathcal E_{\mathrm{score}}(\lambda)\big)\to 1.
$$
\end{lemma}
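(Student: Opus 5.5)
We verify the two events separately and then intersect them, since $\Pbb(\mathcal E_1\cap\mathcal E_2)\ge 1-\Pbb(\mathcal E_1^c)-\Pbb(\mathcal E_2^c)$. The common ingredient is that every stochastic error appearing in Theorem~\ref{thm:stage1-pred}, Lemma~\ref{lem:projector} and Lemma~\ref{lem:oracle-score} is of order at most $\eta_n$ in \eqref{eq:eta-n}. The margins in \eqref{eq:oracle-event-cond}, namely $C_\beta\eta_n$ and $\lambda\ge C_\lambda\eta_n$, are then chosen large enough to absorb these errors.

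\emph{Separation event.} We work on the intersection of the high-probability events of Theorem~\ref{thm:stage1-pred} and Lemma~\ref{lem:projector}. By the triangle inequality, for every $k\in[m]$,
\[
\abs{\widehat\theta^{(0)}_k-\theta^*_k}\le \norm{\widehat{\bm\theta}^{(0)}-\bar{\bm\theta}}_2+\abs{\bar\theta_k-\theta_k^*}.
\]
The first term is bounded by \eqref{eq:l2-stage1}, namely $C(\sqrt{|A|}\lambda_0+n^{-1/2}\norm{\bm v_A}_2)$. By \eqref{eq:vA-l2-bound}, $n^{-1/2}\norm{\bm v_A}_2\le C(R_q^{1/2}\tau_\star^{1-q/2}+\Delta_{n,\kappa}\norm{\bm y^*}_2/\sqrt n)$. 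The second term is bounded in two cases. For $k\in A$ it is at most $\Delta_{n,\kappa}\norm{\bm y^*}_2/\sqrt n$ by \eqref{eq:bar-theta-bound}. For $k\in A^c$ we have $\bar\theta_k=0$, so it equals $|\theta^*_k|<\tau_\star\le\sqrt{\log m/n}$, which holds when $m\ge n$; otherwise we absorb it using $\tau_\star\lesssim\lambda_0$ up to constants. Altogether,
\[
\max_k\abs{\widehat\theta^{(0)}_k-\theta_k^*\1(k\in A)}\le C'\eta_n.
\]
Taking $C_\beta>C'$, for $j\in A$ we get $|\widehat\theta^{(0)}_j|\ge a_2\lambda+C_\beta\eta_n-C'\eta_n>a_2\lambda$. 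For $k\in A^c$ we get $|\widehat\theta_k^{(0)}|\le C'\eta_n<a_2\lambda$, since $a_2>1$ and $C_\lambda>C'$. The probability of this event tends to one because $1-C_1m^{-C_2}\to1$ and $1-c_1e^{-c_2\log n}\to1$.

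\emph{Score event.} Lemma~\ref{lem:oracle-score} gives a subgradient $\bm\Delta^{(o)}$ with $\bm\Delta^{(o)}_A=\bm 0$ and
\[
\norm{\bm\Delta^{(o)}_{A^c}}_\infty=O_P\Big(\sqrt{(|A|+\log m)/n}+n^{-1/2}\norm{\bm v_A}_2\Big).
\]
We check that this rate is $O(\eta_n)$. Since $\lambda_0\asymp\sqrt{\log m/n}\gtrsim n^{-1/2}$, we have $\sqrt{|A|/n}\lesssim\sqrt{|A|}\lambda_0$; the $\log m$ part is an explicit summand of $\eta_n$; and the $\bm v_A$ part was bounded above. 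So for any $\epsilon>0$ there is an $M_\epsilon$ with $\Pbb(\norm{\bm\Delta^{(o)}_{A^c}}_\infty\le M_\epsilon\eta_n)\ge1-\epsilon$ for all large $n$.

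The main obstacle is that \eqref{eq:oracle-event-cond} fixes $C_\lambda$ as a constant, whereas an $O_P$ bound provides only an $\epsilon$-dependent constant. This is the step needing care. The first approach is to upgrade Lemma~\ref{lem:oracle-score} to a high-probability statement with an explicit constant. This requires three pieces. First, Lemma~\ref{lem:scoreA} gives an exponential bound for $\norm{\mathbf S_{n,A}}_2$, so the oracle radius $r_n$ in Lemma~\ref{lem:oracle-rate} holds with a fixed $M$ and probability $1-2e^{-C_2|A|}$. Second, the VC-type U-process maximal inequality can be stated in its deviation form with a fixed constant and probability $1-m^{-c}$. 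Third, the mean bound \eqref{eq:oracle-score-mean} is already deterministic. Together these give $\norm{\bm\Delta^{(o)}_{A^c}}_\infty\le C''\eta_n$ with probability tending to one, and it then suffices to take $C_\lambda>C''/a_1$, so that the bound is strictly below $a_1\lambda$.

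If $|A|$ stays bounded, the bound $1-2e^{-C_2|A|}$ need not tend to one. In that case we use the fallback reading that ``sufficiently large'' $C_\lambda$ may be chosen after $\epsilon$. This yields $\liminf\Pbb\ge1-\epsilon$ for each $\epsilon$, and letting $\epsilon\downarrow0$ along a diagonal sequence gives convergence to one.
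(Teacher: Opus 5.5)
Your argument follows the paper's proof. The separation event comes from the $\ell_2$ bound \eqref{eq:l2-stage1}, the bias bounds of Lemma~\ref{lem:projector} and the triangle inequality. The score event comes from Lemma~\ref{lem:oracle-score} together with a comparison of its rate to $\eta_n$.

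For $k\in A^c$ your detour through $\theta_k^*$ is unnecessary. Since $\bar\theta_k=0$, you have $\abs{\widehat\theta_k^{(0)}}\le\norm{\widehat{\bm\theta}^{(0)}-\bar{\bm\theta}}_2$ directly, which is what the paper uses. Your justification ``$\tau_\star\lesssim\lambda_0$'' need not hold when $m=p$ is much smaller than $n$. The detour survives anyway: $A\neq\varnothing$ forces $R_q\ge\tau_\star^q$, so $\tau_\star\le R_q^{1/2}\tau_\star^{1-q/2}\le\eta_n$.

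More importantly, you are right that an $O_P(\eta_n)$ bound does not by itself give $\norm{\bm\Delta^{(o)}_{A^c}}_\infty<a_1\lambda$ with probability tending to one when $C_\lambda$ is \emph{fixed}. The paper passes over this with ``after increasing $C_\lambda$ if necessary''. Your plan to upgrade Lemma~\ref{lem:oracle-score} to a fixed-constant high-probability bound is the more careful treatment.

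The step that does not work is your fallback for bounded $|A|$. Choosing $C_\lambda$ after $\epsilon$ and passing to a diagonal sequence makes $C_\lambda=C_{\lambda,n}\to\infty$. That is a strictly stronger hypothesis than \eqref{eq:oracle-event-cond}, and through $a_2\lambda$ it also demands larger signals. With $C_\lambda$ genuinely fixed you only get $\liminf\Pbb\ge 1-\epsilon(C_\lambda)$, not convergence to one.

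The fallback is not needed, though. The term $\sqrt{|A|}\lambda_0\asymp\sqrt{|A|\log m/n}$ inside $\eta_n$ carries an extra $\sqrt{\log m}$ factor over the level $\sqrt{|A|/n}$ in Lemma~\ref{lem:scoreA}. On the event of Lemma~\ref{lem:score},
$\norm{\mathbf S_{n,A}}_2\le\sqrt{|A|}\,\norm{\mathbf S_n}_\infty\le C_1\sqrt{|A|\log m/n}\le C\eta_n$,
and this event has probability at least $1-2m^{-C_2}\to 1$ regardless of $|A|$. Equivalently, take $t\asymp\sqrt{|A|\log m/n}$ in the permutation concentration inside the proof of Lemma~\ref{lem:scoreA}.

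Feeding this bound and the bound on $n^{-1/2}\norm{\bm v_A}_2$ from Lemma~\ref{lem:projector} into \eqref{eq:oracle-quadratic} gives an oracle radius $r_n=M\eta_n$. Here $M$ is fixed and the event has probability tending to one. Your first approach then goes through in every case, with $C_\lambda>C''/a_1$.
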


\begin{proof}
First verify $\mathcal E_{\mathrm{sep}}(\lambda)$. By Theorem~\ref{thm:stage1-pred} and Lemma~\ref{lem:projector},
$$
\norm{\widehat{\bm\theta}^{(0)}-\bar{\bm\theta}}_\infty
\le
\norm{\widehat{\bm\theta}^{(0)}-\bar{\bm\theta}}_2
\le
C\left\{
\sqrt{|A|}\lambda_0
+
R_q^{1/2}\tau_\star^{1-q/2}
+
\Delta_{n,\kappa}\frac{\norm{\bm y^*}_2}{\sqrt n}
\right\}
$$
with probability tending to one. Moreover, \eqref{eq:bar-theta-bound} gives
$$
\max_{j\in A}\abs{\bar\theta_j-\theta_j^*}
\le
\Delta_{n,\kappa}\frac{\norm{\bm y^*}_2}{\sqrt n}
$$
with probability tending to one. Therefore, on an event whose probability tends to one,
\begin{align*}
\min_{j\in A}\abs{\widehat\theta_j^{(0)}}
&\ge
\min_{j\in A}\abs{\theta_j^*}
-
\max_{j\in A}\abs{\bar\theta_j-\theta_j^*}
-
\norm{\widehat{\bm\theta}^{(0)}-\bar{\bm\theta}}_\infty\\
&>
a_2\lambda
\end{align*}
by the first inequality in \eqref{eq:oracle-event-cond}. Also, since $\bar{\bm\theta}_{A^c}=\bm 0$,
$$
\max_{k\in A^c}\abs{\widehat\theta_k^{(0)}}
\le
\norm{\widehat{\bm\theta}^{(0)}-\bar{\bm\theta}}_\infty
<
a_2\lambda
$$
by the second inequality in \eqref{eq:oracle-event-cond} after enlarging $C_\lambda$ if necessary. Hence $\mathcal E_{\mathrm{sep}}(\lambda)$ occurs with probability tending to one.

Next verify $\mathcal E_{\mathrm{score}}(\lambda)$. Lemma~\ref{lem:oracle-score} yields a KKT subgradient $\bm\Delta^{(o)}$ with $\bm\Delta_A^{(o)}=\bm 0$ and
$$
\norm{\bm\Delta_{A^c}^{(o)}}_\infty
=
O_P\left(
\sqrt{\frac{|A|+\log m}{n}}
+
n^{-1/2}\norm{\bm v_A}_2
\right).
$$
By \eqref{eq:vA-l2-bound},
$$
n^{-1/2}\norm{\bm v_A}_2
=
O_P\left(
R_q^{1/2}\tau_\star^{1-q/2}
+
\Delta_{n,\kappa}\frac{\norm{\bm y^*}_2}{\sqrt n}
\right).
$$
Since $\lambda_0\asymp\sqrt{\log m/n}$,
$$
\sqrt{\frac{|A|+\log m}{n}}
\le
C\left(
\sqrt{|A|}\lambda_0
+
\sqrt{\frac{\log m}{n}}
\right)
\le
C\eta_n.
$$
Therefore the second inequality in \eqref{eq:oracle-event-cond} implies
$$
\norm{\bm\Delta_{A^c}^{(o)}}_\infty<a_1\lambda
$$
with probability tending to one, after increasing $C_\lambda$ if necessary. This is exactly $\mathcal E_{\mathrm{score}}(\lambda)$.

Combining the two parts proves the lemma.
\end{proof}

\section{Proofs of the second-stage results}

\begin{proof}[Proof of Theorem~\ref{thm:oracle-reduction}]
Fix any oracle minimizer $\widehat{\bm\theta}^{(o)}$. On the event $\mathcal E_{\mathrm{sep}}(\lambda)$, the adaptive weights satisfy
$$
w_j(\lambda)=0,\quad j\in A,
\qquad
w_k(\lambda)\ge a_1\lambda,\quad k\in A^c.
$$
On the event $\mathcal E_{\mathrm{score}}(\lambda)$, there exists a subgradient vector
$$
\bm\Delta^{(o)}\in\partial\mathcal Q_n(\widehat{\bm\theta}^{(o)};\widehat{\mathbf U})
$$
such that
$$
\bm\Delta_A^{(o)}=\bm 0,
\qquad
|\Delta_k^{(o)}|<a_1\lambda\le w_k(\lambda)\quad\text{for every }k\in A^c.
$$

Let
$$
F_\lambda(\bm\theta)
=
\mathcal Q_n(\bm\theta;\widehat{\mathbf U})
+
\sum_{k=1}^m w_k(\lambda)|\theta_k|
$$
be the second-stage objective. For any $\bm\theta\in\R^m$, convexity of $\mathcal Q_n(\cdot;\widehat{\mathbf U})$ gives
$$
\mathcal Q_n(\bm\theta;\widehat{\mathbf U})
-
\mathcal Q_n(\widehat{\bm\theta}^{(o)};\widehat{\mathbf U})
\ge
(\bm\Delta^{(o)})^\top(\bm\theta-\widehat{\bm\theta}^{(o)}).
$$
Because $\widehat{\bm\theta}^{(o)}_{A^c}=\bm 0$ and $\bm\Delta_A^{(o)}=\bm 0$, we obtain
\begin{align}
F_\lambda(\bm\theta)-F_\lambda(\widehat{\bm\theta}^{(o)})
&\ge
(\bm\Delta_{A^c}^{(o)})^\top\bm\theta_{A^c}
+
\sum_{k\in A^c} w_k(\lambda)|\theta_k| \nonumber\\
&\ge
\sum_{k\in A^c}\Big\{w_k(\lambda)-|\Delta_k^{(o)}|\Big\}|\theta_k|
\ge 0.
\label{eq:oracle-reduction-ineq}
\end{align}
This proves that every oracle minimizer is also a minimizer of the weighted second-stage problem.

Conversely, if $\bm\theta$ minimizes $F_\lambda$, then \eqref{eq:oracle-reduction-ineq} must hold with equality. Because each coefficient in the last sum is strictly positive on $\mathcal E_{\mathrm{score}}(\lambda)$, equality forces $\bm\theta_{A^c}=\bm 0$. Once the inactive coordinates vanish, the second-stage objective reduces to the unpenalized oracle objective on the active subspace, so $\bm\theta$ must minimize $\mathcal Q_n(\bm\theta;\widehat{\mathbf U})$ subject to $\theta_{A^c}=0$. Hence $\bm\theta$ is an oracle minimizer.

Therefore the minimizer sets of \eqref{eq:stage2} and \eqref{eq:oracle-def} coincide.
\end{proof}

\begin{proof}[Proof of Theorem~\ref{thm:main-stage2}]
Lemma~\ref{lem:oracle-events} shows that
$$
\Pbb\big(\mathcal E_{\mathrm{sep}}(\lambda)\cap \mathcal E_{\mathrm{score}}(\lambda)\big)\to 1.
$$
By Theorem~\ref{thm:oracle-reduction}, every second-stage minimizer is therefore an oracle minimizer with probability tending to one. Hence it is enough to bound the prediction error of an arbitrary oracle minimizer on the event
$$
\mathcal E_{\mathrm{sep}}(\lambda)\cap \mathcal E_{\mathrm{score}}(\lambda).
$$
On that event,
\begin{align}
\frac{1}{n}\norm{\widehat{\bm y}^{(1)}(\lambda)-\bm y^*}_2^2
&=
\frac{1}{n}\norm{\widetilde{\mathbf U}\widehat{\bm\theta}^{(o)}-\bm y^*}_2^2 \nonumber\\
&\le
\frac{3}{n}\norm{\widetilde{\mathbf U}(\widehat{\bm\theta}^{(o)}-\bar{\bm\theta})}_2^2
+
\frac{3}{n}\norm{(\widehat{\mathbf P}_A-\mathbf P_A)\bm y^*}_2^2
+
\frac{3}{n}\norm{(\mathbf I_n-\mathbf P_A)\bm y^*}_2^2.
\label{eq:stage2-decomp-proof}
\end{align}
This decomposition is simply the squared norm of
$$
\widetilde{\mathbf U}\widehat{\bm\theta}^{(o)}-\bm y^*
=
\widetilde{\mathbf U}(\widehat{\bm\theta}^{(o)}-\bar{\bm\theta})
+
(\widehat{\mathbf P}_A-\mathbf P_A)\bm y^*
-
(\mathbf I_n-\mathbf P_A)\bm y^*.
$$

Now control the three pieces one by one. Lemma~\ref{lem:oracle-rate} yields
\begin{equation}
\frac{1}{n}\norm{\widetilde{\mathbf U}(\widehat{\bm\theta}^{(o)}-\bar{\bm\theta})}_2^2
=
O_P\Big(\frac{|A|}{n}+\frac{1}{n}\norm{\bm v_A}_2^2\Big).
\label{eq:main-proof-piece1}
\end{equation}
Lemma~\ref{lem:projector} gives, with probability tending to one,
\begin{equation}
\frac{1}{n}\norm{(\widehat{\mathbf P}_A-\mathbf P_A)\bm y^*}_2^2
\le
\frac{C\,\norm{\bm y^*}_2^2}{n\kappa^2}
\left(
\frac{n}{p}+\frac{n^2}{p^2}+\frac{n\norm{\mathbf X}_2^2}{p^2}
\right).
\label{eq:main-proof-piece2}
\end{equation}
Finally, because $A=\{k:|\theta_k^*|\ge \tau_\star\}$, Lemma~\ref{lem:tail} implies
\begin{equation}
\frac{1}{n}\norm{(\mathbf I_n-\mathbf P_A)\bm y^*}_2^2
=
\norm{\bm\theta^*_{A^c}}_2^2
\le
R_q\tau_\star^{2-q}.
\label{eq:main-proof-piece3}
\end{equation}

To remove $\bm v_A$ from \eqref{eq:main-proof-piece1}, use
$$
\bm v_A
=
(\mathbf I_n-\widehat{\mathbf P}_A)\bm y^*
=
(\widehat{\mathbf P}_A-\mathbf P_A)\bm y^*
+
(\mathbf I_n-\mathbf P_A)\bm y^*.
$$
Therefore
\begin{equation}
\frac{1}{n}\norm{\bm v_A}_2^2
\le
\frac{2}{n}\norm{(\widehat{\mathbf P}_A-\mathbf P_A)\bm y^*}_2^2
+
\frac{2}{n}\norm{(\mathbf I_n-\mathbf P_A)\bm y^*}_2^2.
\label{eq:main-proof-vA}
\end{equation}
Combining \eqref{eq:main-proof-piece1}--\eqref{eq:main-proof-vA} with \eqref{eq:stage2-decomp-proof} shows that
$$
\frac{1}{n}\norm{\widehat{\bm y}^{(1)}(\lambda)-\bm y^*}_2^2
=
O_P\left\{
\frac{|A|}{n}
+
R_q\tau_\star^{2-q}
+
\frac{\norm{\bm y^*}_2^2}{n\kappa^2}
\left(
\frac{n}{p}+\frac{n^2}{p^2}+\frac{n\norm{\mathbf X}_2^2}{p^2}
\right)
\right\},
$$
which is exactly \eqref{eq:main-stage2-bound}.
\end{proof}

\bibliographystyle{apa}
\bibliography{PRS_submission_refs}

\end{document}